\pgfplotsset{compat=1.14}
\newtheorem{theorem}{Theorem}
\newtheorem{lemma}{Lemma}
\theoremstyle{definition}
\theoremstyle{remark}
\begin{document}
\allowdisplaybreaks

\title{
Communication with Crystal-Free Radios
}

\author{Dor Shaviv, Ayfer \"{O}zg\"{u}r, and Amin Arbabian
\thanks{This work was supported 
in part by a Robert Bosch Stanford Graduate Fellowship, 
in part by the National Science Foundation under Grant CCF-1618278, 
and in part by the SystemX Alliance.
This work was presented in part at the 2017 IEEE Global Communications Conference (GLOBECOM)~\cite{crystalfreeGLOBECOM}.
}
\thanks{The authors are with the Department of Electrical Engineering, Stanford University, Stanford, CA 94305 USA (e-mail: shaviv@stanford.edu; aozgur@stanford.edu, arbabian@stanford.edu).}
}

\maketitle

\begin{abstract}
We consider a communication channel where there is no common clock between the transmitter and the receiver. This is motivated by the recent interest in building system-on-chip radios for Internet of Things applications, which cannot rely on crystal oscillators for accurate timing. We identify two types of clock uncertainty in such systems: timing jitter, which occurs at a time scale faster than the communication duration (or equivalently blocklength); and clock drift, which occurs at a slower time scale. We study the zero-error capacity under both types of timing imperfections, and obtain optimal zero-error codes for some cases. Our results show that, as opposed to common practice, in the presence of clock drift it is highly suboptimal to try to learn and track the clock frequency at the receiver; rather, one can design codes that come close to the performance of perfectly synchronous communication systems without any clock synchronization at the receiver.
\end{abstract}

\begin{IEEEkeywords}
Asynchronous communication, clock drift, jitter, crystal-free radios
\end{IEEEkeywords}

\IEEEpeerreviewmaketitle

\section{Introduction}
\label{sec:introduction}

The next exponential growth in connectivity is projected to be no longer in access between people but in connecting objects and machines in the age of Internet of Things (IoT). This is partly fueled by the emergence of tiny, low-cost wireless devices that combine communication, computation and sensing. These wireless devices are expected to form the fabric of smart technologies and cyberphysical systems, enabling a plethora of exciting applications: from in-body and personal health monitoring, to smart homes and transportation systems, to automation and monitoring in smart grids.

However, scaling wireless devices from billions to potentially trillions (as envisioned by some forecasts \cite{bassi2008internet,trillion3}) requires orders of magnitude reduction in costs and often size, both of which are dominated by external components such as batteries and crystal oscillators. This has led to significant recent interest in building miniature radios that do not possess any external components \cite{papotto201490,bhamra201524,bae201545,sebastiano2009low,drago2009impulse,tabesh2015power}.
For example, the ant-radios of \cite{tabesh2015power} integrate a full wireless communication system, including the full transceiver, antenna, and clock, on a single CMOS chip of size $4.4\  \text{mm}^2$.  A small crystal oscillator, on the other hand, is around $1.9\ \text{mm}^2$, which is about half the size of the entire system. 
In addition to reducing size and cost, eliminating external components is also desirable for eliminating the extra steps for integration, packaging, and assembly. In particular, the ant-radios of \cite{tabesh2015power} use
an on-chip low-power and low-accuracy 200 MHz ring oscillator to control the symbol rate instead of a crystal oscillator, and operate without a battery; they are instead powered remotely via wireless power transfer.

Compared to crystal resonator-based systems, ring oscillator systems experience greater jitter and drift, causing the clock frequency variation to lie within a $\sim 100$ MHz range centered at 200 MHz. This is incompatible with many conventional communication schemes and poses a significant design challenge. Normally, the receiver employs a timing recovery mechanism such as an early-late gate \cite{proakis2008digital} to extract the transmitter's clock (or symbol rate). However, this is only possible when the transmitter's clock is relatively stable.

\begin{figure}
\centering
\def \picheight {6.5pc}
\includegraphics[height=2cm]{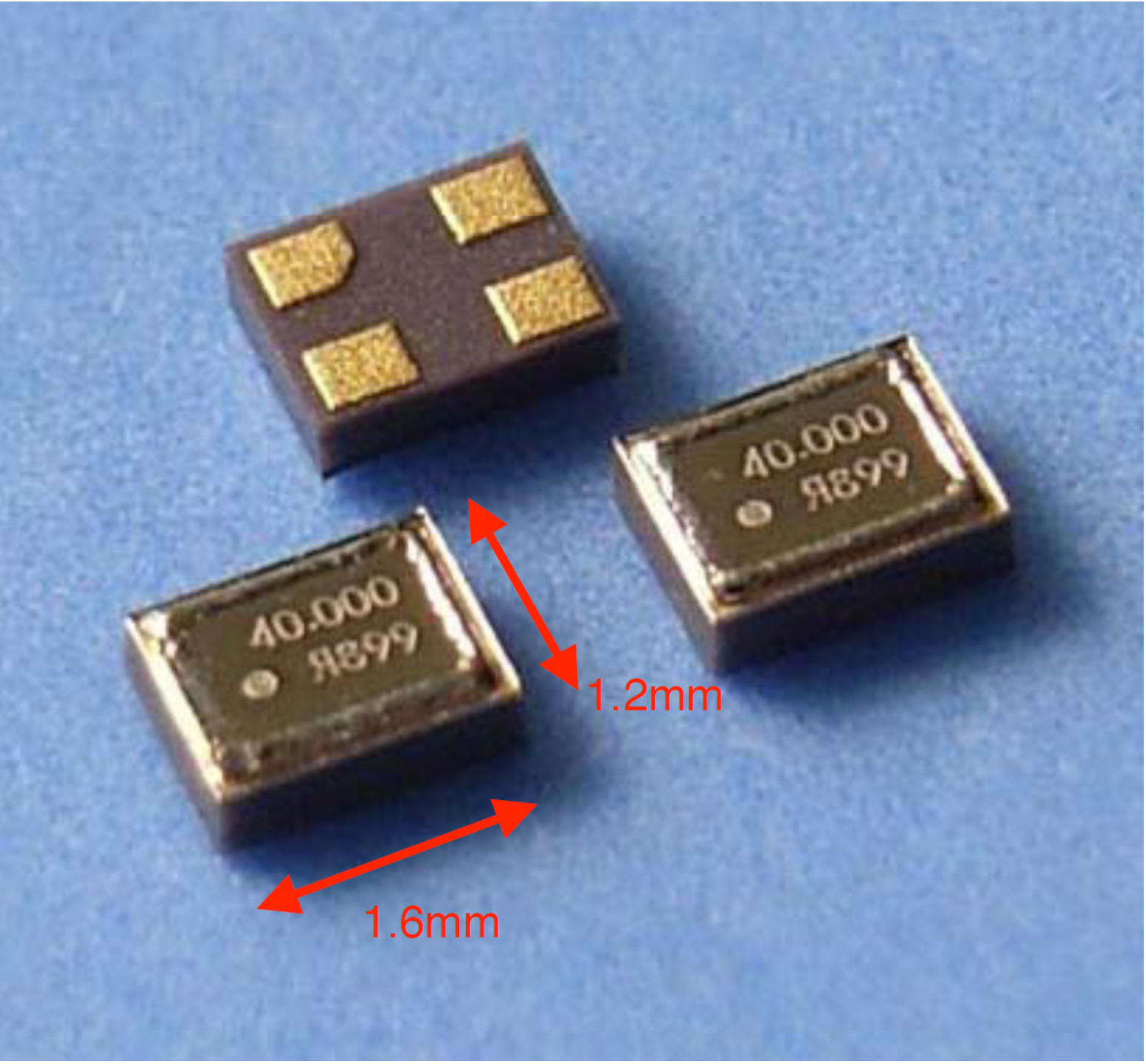}\ 
\includegraphics[height=2cm]{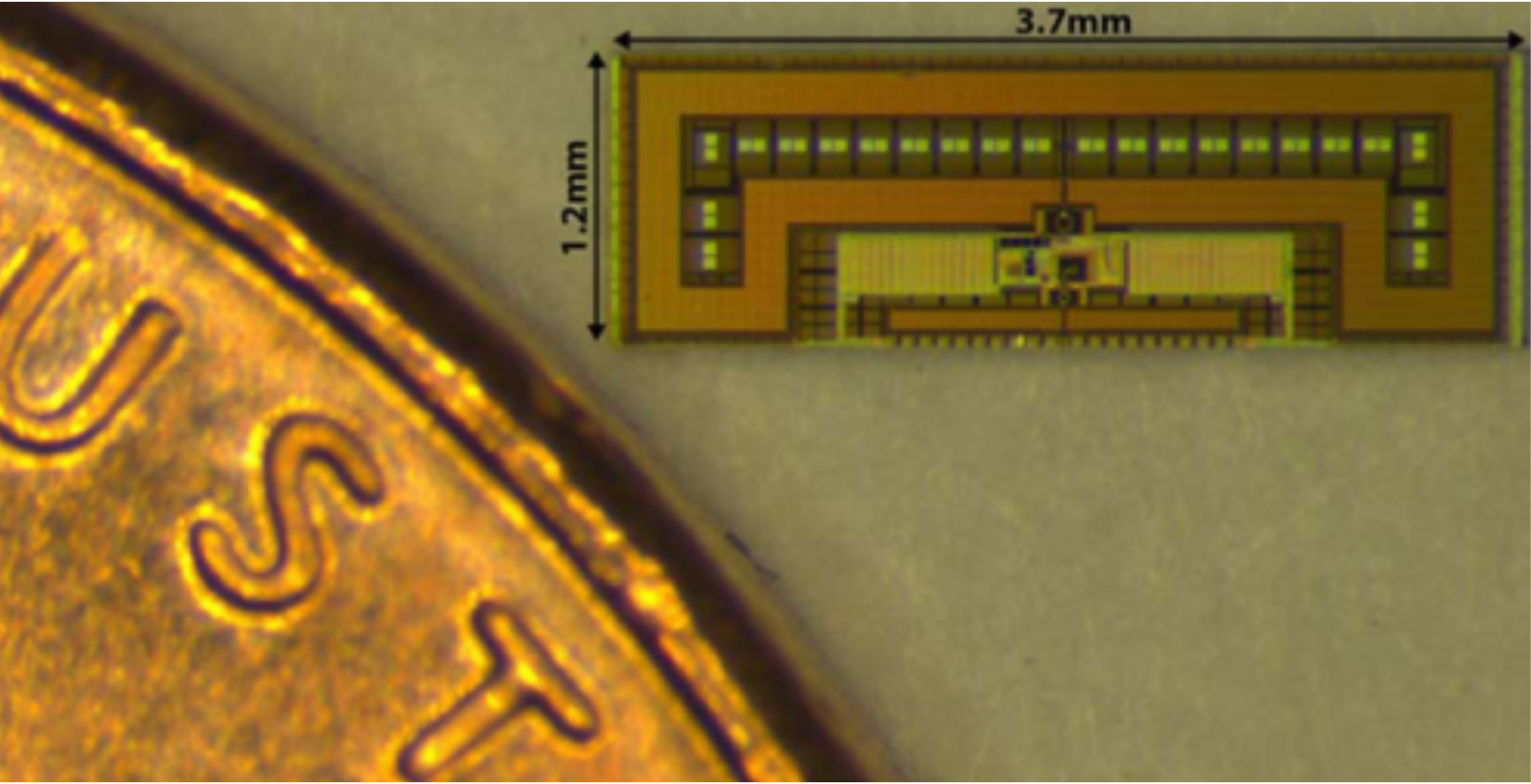}
\caption{(a) A typical crystal oscillator. (b) The ant-radios of \cite{tabesh2015power}.}
\end{figure}

In particular, consider transmission using pulse-position modulation (PPM), as done in \cite{tabesh2015power} due to the energy efficiency of this modulation technique for wideband communication. In PPM, information is encoded in the position of a pulse transmitted in one of $M$ bins, where $M=64$ in \cite{tabesh2015power}.
The bin duration is determined by the inaccurate ring oscillator, and can vary between 4 and 7 ns. Thus the uncertainty in the transmitter's clock makes it impossible for the receiver to decode the received message. To overcome this problem, in \cite{tabesh2015power} transmission begins with two extra pulses, transmitted back to back in two consecutive bins, and the frame size $M$ is restricted to $64$. The receiver can learn the bin duration (and thus the transmitter's clock) by measuring the time between the first two pulses, and subsequently decode the location of the third (information-bearing) pulse.\footnote{The width of the transmitted pulse is much shorter than the duration of the bin in \cite{tabesh2015power}.}
Restricting the frame size limits the amount of accumulated jitter and prevents the transmitter and receiver clock from going out of sync during the course of transmission. This synchronization cost presents a significant burden on the transmitter, as the energy consumption of the transmitter is dominated by the transmitted energy, $2/3$ of which is now spent on synchronization. 

The current paper provides a study of reliable communication in such systems, where there is no common clock between the transmitter and the receiver, from a fundamental perspective.
Motivated by digital recording, communication without a synchronous clock has been considered in previous information theoretic literature \cite{yeung2009reliable,baggen1993information,shamai1991bounds,hekstra1993capacity,iyengar2016capacity}, where these works model the absence of a common clock as timing jitter.
For example in \cite{yeung2009reliable}, which is most closely related to our work, jitter causes the transmitted signal to be arbitrarily ``stretched'' or ``squeezed'' in time by a varying factor during the course of communication. In other words, the ``stretching'' or ``squeezing'' occurs at a time scale faster than the duration of communication (or equivalently blocklength). In \cite{tabesh2015power} however, the clock remains sufficiently stable during the course of the 64-PPM symbol. The real challenge is that each time the transmitter sends a 64-PPM symbol, it is encoded with an unknown (but stable) clock whose frequency can lie anywhere between $150$ and $250$ MHz. See Fig.~\ref{fig:amin_scheme}.

\begin{figure}
\centering
\begin{tikzpicture}
\tikzstyle{every node}=[font=\scriptsize];

\def \framelength {2.5};
\def \framegap {2};
\def \pulseheight {1};

\draw (0,0) -- (\framelength,0);
\draw[dashed] (\framelength,0) -- (\framelength+\framegap,0);
\draw (\framelength+\framegap,0) -- (2*\framelength+\framegap,0);

\foreach \x in {0.1, 0.3, 1.1}
{
  \draw[line width=2pt] (\x,0) -- (\x,\pulseheight);
}
\foreach \x in {0.1,0.3,...,\framelength}
{
  \draw (\x,-0.1) -- (\x,0.1);
}

\foreach \x in {0.1, 0.45, 1.5}
{
  \draw[line width=2pt] (\framelength+\framegap+\x,0) -- (\framelength+\framegap+\x,\pulseheight);
}
\foreach \x in {0.1,0.45,...,\framelength}
{
  \draw (\framelength+\framegap+\x,-0.1) -- (\framelength+\framegap+\x,0.1);
}

\end{tikzpicture}
\caption{Transmission scheme in \cite{tabesh2015power}. Note that the common topology envisioned for IoT applications is that a large number of transmitters access a single sink node. Therefore, successive transmissions of 64-PPM symbols from a given transmitter are interleaved by large time intervals due to a TDMA scheme between a large number of transmitters.}
\label{fig:amin_scheme}
\end{figure}
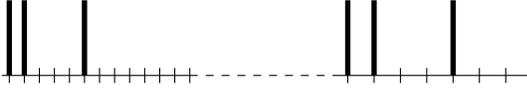

Therefore, in this work, we distinguish between two types of clock uncertainty at the receiver: \emph{timing jitter}, which can cause the transmitter's clock to vary arbitrarily during the course of transmission; and \emph{clock drift}, which occurs at a time scale much larger than the blocklength. The second can be modeled as a fixed but unknown clock. 
Timing jitter was studied in \cite{yeung2009reliable}, where capacity was found and optimal codes were developed.
However, the optimal codes for communication under clock drift is fundamentally different, and in this work we aim to develop codes that are optimal when both imperfections are present. 

We show for example that when only clock drift is present, it is possible to code in such a way that the receiver never learns the exact clock frequency: by considering \emph{ratios} of pulse positions instead of their absolute values, the clock cycle indeed does not play a part. This could be used to almost entirely eliminate the cost of synchronization in \cite{tabesh2015power} (the extra two pulses used to convey the transmitter's clock to the receiver).
Indeed, we show that our scheme can improve from a rate of 6 bits per frame obtained by the 64-PPM scheme, to a rate of 10.76 bits per frame by encoding over ratios, nearly approaching the perfect synchronization upper bound which is 11.02 bits per frame.

%

\section{Channel Model}
\label{sec:model}

We consider {multi-pulse PPM} communication where the transmitter sends $k$ short (0-width) pulses in $M$ \emph{bins}, where each pulse is located in one bin and information is encoded in the position of the pulses (or equivalently the occupied bins). Each of the $\binom{M}{k}$ possible transmit signals can be represented as a binary sequence of length $M$, where $1$ indicates the presence of a pulse in the corresponding bin.
Instead of this, however, in this paper we adopt an equivalent differential representation of the signals, where each one is represented by a vector of length $k$, $(X_1,\ldots,X_k)$, where $X_i$ is the time (number of clock cycles, or number of bins) between the $(i-1)$-th and $i$-th pulses, which is also called the $i$-th \emph{run}. Note that the first run $X_1$ is simply the bin of the first pulse (equivalently define $X_0=0$).
The runs $X_i$ take values in the set $\{1,\ldots,M\}$, and the vector must satisfy $\sum_{i=1}^{k}X_i\leq M$, since there are exactly $k$ pulses in the transmitted signal.
Let the set of all such legitimate input vectors be denoted by~$\mathcal{X}$. When $k$ is clear from the context, we will use boldface $\mathbf{X}$ as a shorthand for the vector $(X_1,\ldots,X_k)$. In this paper we would like to study zero-error communication with vectors from $\mathcal{X}$ in the presence of clock imperfections as we model next. 
While in this paper we only focus on communication with {multi-pulse PPM} (both for simplicity and because this is the modulation of choice for most low-energy systems), our model and results can be extended to allow general modulation techniques in the direction of \cite{yeung2009reliable} {(e.g. pulse-code modulation)}. 

Note that we will keep the blocklength $M$ finite here as it is typically not a large number for systems of interest. We are interested in understanding the structure of optimal codes and the size of the optimal code for finite $M$, rather than the behavior of capacity as $M$ gets large. 
{Moreover, the problem trivializes for $M\to\infty$: if $k$ remains finite, the rate is zero; on the other hand, if $k$ grows with $M$, then the first two pulses can be used to perfectly learn the clock as in \cite{tabesh2015power} without any loss in the communication rate, and the problem becomes identical to one with perfect synchronization.}

By using input vectors from $\mathcal{X}$, our goal is to achieve zero-error communication under the presence of the following two types of clock imperfections:

\paragraph*{Clock drift} The receiver observes the transmitted vector multiplied by an unknown fixed real number $T$ that takes values in a closed interval $[T_1,T_2]$, for $0<T_1\leq T_2$. We will also be interested in the case of unbounded clock drift, such that $T\in[T_1,\infty)$. Hence the observed vector is $T\mathbf{X}$, i.e. the observed run lengths are given by $TX_i$, $i=1,\ldots,k$. Note that this models the scenario where the receiver is unaware of the clock used by the transmitter (it only knows that it lies in a certain interval), but the transmitter's clock remains stable during the transmission of the signal. This models variations of the transmitter's clock frequency at a scale larger than the blocklength for communication (in a flavor similar to large scale fading in wireless systems \cite{Davidsbook}).

\paragraph*{Timing jitter} On top of the slow clock drift, the transmitter's clock experiences random jitter, i.e. variations at a scale faster than the blocklength (in a flavor similar to small scale fading in wireless systems \cite{Davidsbook}). We model this similarly to~\cite{yeung2009reliable} by a strictly positive arbitrary process $r(u)$, unknown to the transmitter nor to the receiver, such that $r(u)\in[a,b]$ for some $0<a\leq b<\infty$. This process represents the instantaneous deviation of the clock from its nominal frequency. If a pulse is transmitted at time $t$, the receiver observes a pulse at time $\int_0^tr(u)du$. Thus, the runs observed at the receiver are given by
\[
Y_i=\int_{\sum_{j=1}^{i-1}TX_j}^{\sum_{j=1}^{i}TX_j}r(u)du,
\qquad i=1,\ldots,k.
\]
Equivalently, we can write
\begin{equation}
Y_i = T Z_i X_i,\qquad i=1,\ldots,k,
\end{equation}
where the $Z_i$'s are arbitrary, independent of each other, and $Z_i\in [a,b]$.
See Fig.~\ref{fig:signal_example} for an illustration of transmitted and received signals.

\begin{figure}
\centering
\begin{tikzpicture}
\tikzstyle{every node}=[font=\scriptsize];

\def \pulseheight {1};
\def \y {-2.5};
\def \T {1.5};

\draw (0,0) -- (6.5,0);
\foreach \x in {0,0.5,...,6.5}
{
  \draw (\x,-0.1) -- (\x,+0.1);
}
\foreach \xl/\xr/\i in {0/1/1,1/2/2,2/4/3}
{
  \draw[line width=3pt] (\xr,0) -- (\xr,\pulseheight);
  \draw[<->] (\xl+0.05,-0.2) -- node[below] {$X_{\i}$} (\xr-0.05,-0.2);
}
\node at (-0.5,0.5*\pulseheight) {(a)};

\draw (0,\y) -- (6.5,\y);
\foreach \x in {0,0.5,...,4}
{
  \draw (\T*\x,\y-0.1) -- (\T*\x,\y+0.1);
}
\foreach \xl/\xr/\i in {0/1/1,1/2/2,2/4/3}
{
  \draw[line width=3pt] (\T*\xr,\y) -- (\T*\xr,\y+\pulseheight);
  \draw[<->] (\T*\xl+0.05,\y-0.2) -- node[below] {$TX_{\i}$} (\T*\xr-0.05,\y-0.2);
}
\node at (-0.5,\y+0.5*\pulseheight) {(b)};

\draw (0,2*\y) -- (6.5,2*\y);
\foreach \xl/\xr/\i in {0/0.8/1,0.8/2.2/2,2.2/3.7/3}
{
  \draw[line width=3pt] (\T*\xr,2*\y) -- (\T*\xr,2*\y+\pulseheight);
  \draw[<->] (\T*\xl+0.05,2*\y-0.2) -- node[below] {$TZ_{\i}X_{\i}$} (\T*\xr-0.05,2*\y-0.2);
}
\node at (-0.5,2*\y+0.5*\pulseheight) {(c)};

\end{tikzpicture}
\caption{Example of transmitted signal and received signal.
(a) Transmitted signal.
(b) Signal after the effect of clock drift.
(c) Received signal $\mathbf{Y}$.}
\label{fig:signal_example}
\end{figure}
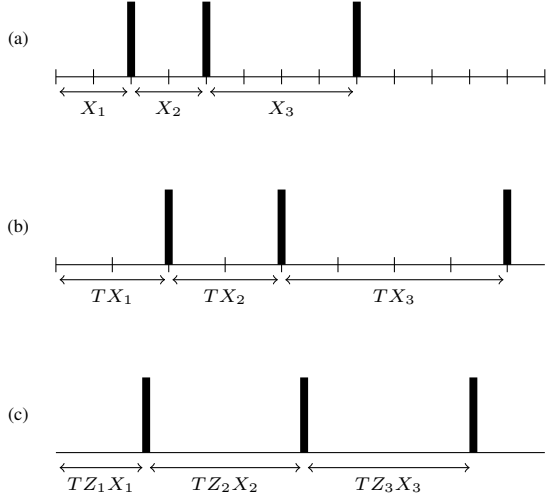

%

A pair of input vectors $\mathbf{x}$ and $\mathbf{x}'$ are said to be \emph{indistinguishable} at the output if there exist $T,T'\in[T_1,T_2]$ and $Z_1,\ldots,Z_k,Z'_1,\ldots,Z'_k\in[a,b]$ such that 
\begin{equation}
T Z_i x_i = T' Z'_i x'_i,\qquad i=1,\ldots,k.
\label{eq:def_indistinguishable}
\end{equation}
That is, two input vectors are indistinguishable if they can produce the same signal at the output.
Accordingly, two vectors are \emph{distinguishable} if they are not indistinguishable.
{As observed in~\cite{yeung2009reliable},}
it can be seen from \eqref{eq:def_indistinguishable} that, rather than the actual boundaries of the intervals $[T_1,T_2]$ and $[a,b]$, only their ratios are relevant. Therefore we are motivated to define the quantities 
\[\xi = \frac{b}{a}\quad\text{and}\quad\gamma = \frac{T_2}{T_1}.\]
Note that $\xi,\gamma\geq1$, where equality means the absence of jitter or clock drift, respectively. Note also that $\gamma$ can be infinity.

A zero-error code $\mathcal{C}$ is a set of input vectors, called \emph{codewords}, such that all of them are distinguishable at the receiver. 
{Note that there is no notion of probability here; the codewords are required to be distinguishable for any possible realization of $T$ and $Z_i$'s.}
We say that the rate of a code is $R={\log|\mathcal{C}|}$ bits per frame, and the maximum of all rates is the zero-error capacity.
In the following sections, we study optimal zero-error codes for the channel defined by $(k,M,\xi,\gamma)$, which are zero-error codes with the maximal number of codewords.

Note that the classical definition of zero-error capacity \cite{shannon1956zero} concerns the maximal rate achieved asymptotically over many repeated uses of the channel; in this work, we consider a different notion of capacity by defining codes for a single block of finite fixed length (one-shot).

\section{Optimal Codes}

In this section we study optimal codes for the channel $(k,M,\xi,\gamma)$, for several special cases of interest.
Denote an optimal code by $\mathcal{C}^*_{\xi,\gamma}$, where $k,M$ should be understood from the context.

\subsection{No Jitter ($\xi=1$) and Unbounded Clock Drift ($\gamma=\infty$)}
The clock drift is $T\in[T_1,\infty)$ for some $T_1>0$, and we can let $Z_i=1$ without loss of generality.
First, observe that if $k=1$, reliable communication is not possible. This is because upon transmitting $X_1$, the output $Y_1$ can be any number in $[T_1X_1,\infty)$, and all input signals are indistinguishable.

Assume $k\geq2$. For an input vector $\mathbf{x}=(x_1,\ldots,x_k)$, define the \emph{ratios vector} $\mathbf{u}=(u_2,\ldots,u_{k})$ by
\[
u_i=\frac{x_i}{x_1},\qquad i=2,\ldots,k.
\]
\begin{lemma}\label{lemma:ratios}
For a channel with $k>1$, $\xi=1$, and $\gamma=\infty$, two vectors $\mathbf{x}$ and $\mathbf{x}'$ are distinguishable if and only if their ratio vectors $\mathbf{u}$ and $\mathbf{u}'$ are distinct.
\end{lemma}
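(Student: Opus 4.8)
\emph{Proof approach.} The plan is to translate the indistinguishability condition \eqref{eq:def_indistinguishable} into a statement purely about the componentwise ratios $x_i'/x_i$, and then to observe that an unbounded drift interval lets us realize \emph{any} positive such ratio; since the lemma asserts ``distinguishable iff ratio vectors distinct,'' it suffices to prove the equivalent form ``indistinguishable iff ratio vectors equal.'' Since $\xi=1$ forces $Z_i=Z_i'=1$, vectors $\mathbf{x},\mathbf{x}'$ are indistinguishable exactly when there exist $T,T'\in[T_1,\infty)$ with $Tx_i=T'x_i'$ for all $i=1,\ldots,k$. Every run lies in $\{1,\ldots,M\}$, hence is strictly positive, so this system is equivalent to the condition that the ratio $x_i'/x_i$ equals the same constant $c:=T/T'>0$ for every $i$, \emph{and} that $c$ be attainable as a ratio of two numbers in $[T_1,\infty)$.

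For the forward implication, I would assume $\mathbf{x}$ and $\mathbf{x}'$ are indistinguishable; then $x_i'/x_i=c$ for all $i$, in particular $x_i' = c x_i$ and $x_1'=c x_1$, and dividing gives $u_i' = x_i'/x_1' = x_i/x_1 = u_i$ for $i=2,\ldots,k$, i.e.\ $\mathbf{u}=\mathbf{u}'$. (For $k>1$ this ratio vector is nonempty, which is what makes the statement meaningful; for $k=1$ every pair would be indistinguishable, consistent with the earlier observation that no communication is possible.) For the converse, suppose $\mathbf{u}=\mathbf{u}'$, i.e.\ $x_i/x_1 = x_i'/x_1'$ for $i=2,\ldots,k$ (and trivially for $i=1$). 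Rearranging, $x_i'/x_i = x_1'/x_1 =: c$ for all $i$, so $x_i' = c x_i$ simultaneously. It then remains to pick admissible $T,T'$ with $Tx_i=T'x_i'$, equivalently $T=cT'$: if $c\geq1$ take $T'=T_1$ and $T=cT_1\in[T_1,\infty)$; if $c<1$ take $T=T_1$ and $T'=T_1/c\in[T_1,\infty)$. In either case $Tx_i = cT'x_i = T'x_i'$, so $\mathbf{x}$ and $\mathbf{x}'$ are indistinguishable.

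The only delicate point — and the one I would flag as the crux — is this last realizability step: it is precisely here that $\gamma=\infty$ is used, since with a bounded interval $[T_1,T_2]$ the achievable ratios $T/T'$ are confined to $[1/\gamma,\gamma]$ and the equivalence would fail for codeword pairs whose ratio $c$ escapes that window. Everything else is a direct manipulation of the defining equations, with the only hypotheses actually invoked being that all runs are strictly positive (so one may divide freely) and that $k>1$ (so the ratio vector is not vacuous).
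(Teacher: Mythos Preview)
Your proof is correct and follows essentially the same route as the paper: reduce indistinguishability (with $\xi=1$) to the existence of $T,T'\in[T_1,\infty)$ with $Tx_i=T'x_i'$, then show this is equivalent to equality of the ratio vectors by dividing through and by exhibiting admissible $T,T'$. The only cosmetic difference is that the paper picks $T=T_1x_1'$ and $T'=T_1x_1$ in one stroke (valid since $x_1,x_1'\ge 1$), whereas you case-split on whether $c=x_1'/x_1$ is at least $1$; both constructions accomplish the same realizability step.
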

\begin{proof}
Suppose $\frac{x_i}{x_1}=\frac{x'_i}{x'_1}$ for $i=2,\ldots,k$. By letting $T=T_1x'_1$ and $T'=T_1x_1$, we see that \eqref{eq:def_indistinguishable} holds.
For the other direction, suppose there exist $T,T'$ s.t. $Tx_i=T'x'_i$ for $i=1,\ldots,k$. Dividing by $Tx_1$ or equivalently $T'x'_1$ implies the appropriate ratio vectors are equal.
\end{proof}
According to Lemma~\ref{lemma:ratios}, we can construct an optimal code by taking the maximal number of input vectors with distinct ratio vectors.
\begin{theorem}\label{thm:no_jitter_unbounded_drift}
The following code is an optimal zero-error code for a channel with $k>1$, $\xi=1$, and $\gamma=\infty$:
\begin{equation}
\mathcal{C}^*_{1,\infty}=\big\{\mathbf{x}\in\mathcal{X}:\ \gcd(\mathbf{x})=1\big\},
\label{eq:optimal_code_no_jitter_unbounded_drift}
\end{equation}
where $\gcd(\mathbf{x})=\gcd(x_1,\ldots,x_k)$ is the greatest common divisor of $(x_1,\ldots,x_k)$, i.e. it is the largest integer $d$ s.t. $d \mid x_i$ for all $i=1,\ldots,k$.
\end{theorem}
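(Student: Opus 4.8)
The plan is to combine Lemma~\ref{lemma:ratios} with a short counting argument. The map $\mathbf{x}\mapsto\mathbf{u}$ partitions $\mathcal{X}$ into equivalence classes according to the ratio vector, and I claim that $\{\mathbf{x}\in\mathcal{X}:\gcd(\mathbf{x})=1\}$ is exactly a transversal of this partition, i.e. it picks out precisely one representative from each class. Granting this, optimality is immediate from Lemma~\ref{lemma:ratios}: a zero-error code has pairwise-distinct ratio vectors, hence at most one codeword per class, so its size is bounded by the number of classes; and the proposed code meets that bound with equality.

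First I would record the elementary fact that two vectors $\mathbf{x},\mathbf{x}'\in\mathcal{X}$ have the same ratio vector if and only if $\mathbf{x}'=\lambda\mathbf{x}$ for some positive rational $\lambda$. The ``if'' direction is clear; for ``only if,'' $x_i/x_1=x'_i/x'_1$ for all $i$ forces $\mathbf{x}'=(x'_1/x_1)\mathbf{x}$, and $\lambda=x'_1/x_1$ is a positive rational. Next, I would show each class contains exactly one vector with $\gcd=1$. For existence, set $d=\gcd(\mathbf{x})$; then $\mathbf{x}/d$ has integer entries $\geq 1$ and coordinate sum $\sum_i x_i/d\leq\sum_i x_i\leq M$, so $\mathbf{x}/d\in\mathcal{X}$ with $\gcd(\mathbf{x}/d)=1$. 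For uniqueness, if $\mathbf{x}'=\lambda\mathbf{x}$ with $\lambda=p/q$ in lowest terms and $\gcd(\mathbf{x})=\gcd(\mathbf{x}')=1$, then $q\mid x_i$ for all $i$ and $p\mid x'_i$ for all $i$, which forces $p=q=1$, so $\mathbf{x}=\mathbf{x}'$.

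With these two facts the theorem follows. On one hand, $\mathcal{C}^*_{1,\infty}$ is a valid zero-error code: two of its codewords sharing a ratio vector would, by the above, be equal, so distinct codewords have distinct ratio vectors and are distinguishable by Lemma~\ref{lemma:ratios}. On the other hand, $\mathcal{C}^*_{1,\infty}$ contains exactly one vector from every class (the primitive representative, shown above to lie in $\mathcal{X}$), so $|\mathcal{C}^*_{1,\infty}|$ equals the total number of classes, which by Lemma~\ref{lemma:ratios} upper-bounds the size of any zero-error code. Hence $\mathcal{C}^*_{1,\infty}$ is optimal.

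I do not expect a real obstacle: the substance is entirely contained in Lemma~\ref{lemma:ratios}, and what remains is bookkeeping about rational scalings of integer vectors. The one point that deserves care is checking that the primitive representative of each class actually belongs to $\mathcal{X}$ — this is where the constraint $\sum_i x_i\leq M$ is used, and it holds precisely because dividing by $d\geq 1$ cannot increase the coordinate sum.
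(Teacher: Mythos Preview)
Your proposal is correct and follows essentially the same approach as the paper: both use Lemma~\ref{lemma:ratios} to reduce the question to counting ratio-vector equivalence classes, both identify the primitive (gcd $=1$) representative via the reduced-fraction argument $\lambda=p/q$, and both prove optimality by mapping an arbitrary zero-error code injectively into $\mathcal{C}^*_{1,\infty}$ through division by the gcd. Your framing in terms of a transversal of the partition is a slightly cleaner packaging of the same ideas, but the mathematical content is identical.
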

Any vector $\mathbf{x}$ can be divided by its gcd to obtain a vector with the same ratios vector $\mathbf{u}$ and gcd 1.
Therefore, this code is a maximal set of codewords with distinct ratio vectors.
The receiver can decode by computing the ratios of the received signal $\frac{Y_i}{Y_1}=\frac{TX_i}{TX_1}=U_i$.
We note that there are $\binom{M}{k}$ input vectors in $\mathcal{X}$, so $\mathcal{C}^*_{1,\infty}$ can be constructed in $O(M^k)$ time by an exhaustive search,
which, to the best of our knowledge, is the best that can be done.

We can compare the optimal code here with the 64-PPM scheme in \cite{tabesh2015power}, by setting $k=2$ and $M=65$. While the scheme in \cite{tabesh2015power} consisted of 3 pulses, the first one is used only to mark the beginning of a frame\footnote{While in our model we assume the receiver knows the exact 
{starting time of communication,}
in practice there may be an unknown timing offset (caused by e.g. unknown time-of-flight). This adverse effect is not studied in this work.} and hence will not be counted for the purpose of this comparison.
Therefore discarding the first pulse in \cite{tabesh2015power}, the codebook contains codewords for which $x_1=1$ and $1\leq x_2\leq M-1$, which has a rate of 6 bits per frame.
Computing $\mathcal{C}^*_{1,\infty}$ from~\eqref{eq:optimal_code_no_jitter_unbounded_drift} yields $R=10.35$ bits per frame. In the next section we will see how this can be improved even further by taking into account the bounds on the clock $T$.

\begin{proof}[Proof of Theorem~\ref{thm:no_jitter_unbounded_drift}]
To show that this is a zero-error code, let $\mathbf{x},\mathbf{x}'\in\mathcal{C}^*_{1,\infty}$ be two distinct codewords, and suppose their ratio vectors are equal: 
\[\frac{x_i}{x_1}=\frac{x'_i}{x'_1},\qquad i=2,\ldots,k.\]
Then necessarily $x_1\neq x'_1$, otherwise the codewords are not distinct. Let $\frac{m}{n}$ be the reduced fraction of $\frac{x'_1}{x_1}$, i.e. $\frac{m}{n}=\frac{x'_1}{x_1}$ and $\gcd(m,n)=1$. Then for each $i=1,\ldots,k$:
\[
x'_i=\frac{x'_1}{x_1}x_i
=\frac{m}{n}x_i,
\]
and since $x'_i$ is an integer, $n$ must divide $x_i$ for all $i=1,\ldots,k$.
Since by assumption $\gcd(\mathbf{x})=1$, we must have $n=1$. This implies $x'_i=mx_i$ for all $i$, where $m$ is an integer greater than 1, which means $\gcd(\mathbf{x}')=m>1$. This is a contradiction since $\mathbf{x}'\in\mathcal{C}^*_{1,\infty}$. Hence $\frac{x_i}{x_1}\neq\frac{x'_i}{x'_1}$ for some $i$, and by Lemma~\ref{lemma:ratios} the code is zero-error.

Next, we claim that $\mathcal{C}^*_{1,\infty}$ is optimal by showing that any other zero-error code $\mathcal{C}$ must have at most as many codewords as $\mathcal{C}^*_{1,\infty}$. To this end, construct the code $\tilde{\mathcal{C}}$ from $\mathcal{C}$ by modifying each codeword as follows:
\[
\tilde{x}_i=\frac{x_i}{\gcd(\mathbf{x})},\qquad i=1,\ldots,k,
\]
or in short $\tilde{\mathbf{x}}=\frac{\mathbf{x}}{\gcd(\mathbf{x})}$.
The new codewords all have $\gcd(\tilde{\mathbf{x}})=1$, which implies $\tilde{\mathcal{C}}\subseteq\mathcal{C}^*_{1,\infty}$. By the previous arguments made for $\mathcal{C}^*_{1,\infty}$, the new code $\tilde{\mathcal{C}}$ is zero-error. Moreover, no two codewords in $\mathcal{C}$ map to the same codeword in $\tilde{\mathcal{C}}$; this follows from Lemma~\ref{lemma:ratios} and because $\mathcal{C}$ is zero-error. Therefore $|\tilde{\mathcal{C}}|=|\mathcal{C}|$, which implies $|{\mathcal{C}}|\leq|\mathcal{C}^*_{1,\infty}|$.
\end{proof}

\subsection{No Jitter ($\xi=1$) and Bounded Clock Drift ($\gamma<\infty$)}
\label{subsec:no_jitter_bounded_drift}

Note that the codewords in $\mathcal{C}^*_{1,\infty}$, defined in~\eqref{eq:optimal_code_no_jitter_unbounded_drift}, are distinguishable also when $\gamma<\infty$. Therefore, to construct an optimal code for the current channel, it is enough to add appropriate codewords to $\mathcal{C}^*_{1,\infty}$.
More specifically, for any $\mathbf{x}\in\mathcal{C}^*_{1,\infty}$, one can add its multiples $d\cdot\mathbf{x}=(dx_1,\ldots,dx_k)$, while carefully choosing $d$ so that the codewords are distinguishable.
For this purpose, take any vector $\mathbf{x}\in\mathcal{X}$ with $\gcd(\mathbf{x})=1$, and construct the set $\mathcal{L}_{\mathbf{x}}^\gamma$ as follows:
\begin{enumerate}
\item Start with $\mathcal{L}_{\mathbf{x}}^\gamma=\{\mathbf{x}\}$ and let $d_1=1$.
\item Given $d_{i-1}$, let $d_i$ be the smallest integer such that $d_i/d_{i-1}>\gamma$, i.e. $d_i=\lfloor \gamma d_{i-1}+1\rfloor$.
\item If $d_i\mathbf{x}\in\mathcal{X}$, add it to $\mathcal{L}^\gamma_{\mathbf{x}}$ and repeat step 2. Otherwise, stop the construction.
\end{enumerate}
Observe that all vectors in $\mathcal{L}^\gamma_{\mathbf{x}}$ are distinguishable. To see this, take two vectors, $d\mathbf{x}$ and $d'\mathbf{x}$, where $d,d'$ are two distinct integers. If they are indistinguishable, then there exist $T,T'\in[T_1,T_2]$ s.t. $Tdx_i=T'd'x_i$ for $i=1,\ldots,k$, where $T_2/T_1=\gamma$. This implies $d/d' = T'/T \in [\gamma^{-1},\gamma]$. However, by construction $d,d'$ must satisfy $d/d'>\gamma$ or $d'/d>\gamma$. This is a contradiction, hence they must be distinguishable.
Moreover, this set is ``maximal'' in the sense that it contains the maximal number of distinguishable vectors of the form $d\mathbf{x}$ for some integer $d\geq1$.

In the following theorem, we construct an optimal code by taking a union of all the sets $\mathcal{L}^\gamma_{\mathbf{x}}$ for all vectors $\mathbf{x}\in\mathcal{X}$ with $\gcd(\mathbf{x})=1$, which are exactly the codewords in $\mathcal{C}_{1,\infty}^*$ defined in Theorem~\ref{thm:no_jitter_unbounded_drift}.
\begin{theorem}\label{thm:no_jitter_bounded_drift}
An optimal zero-error code for a channel with $\xi=1$ and $\gamma<\infty$ is given by
\begin{equation}
\mathcal{C}_{1,\gamma}^*=\bigcup_{\mathbf{x}\in\mathcal{C}_{1,\infty}^*}\mathcal{L}^\gamma_{\mathbf{x}},
\label{eq:optimal_code_no_jitter_bounded_drift}
\end{equation}
where $\mathcal{C}_{1,\infty}^*$ is given by \eqref{eq:optimal_code_no_jitter_unbounded_drift}.
\end{theorem}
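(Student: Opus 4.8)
The plan is to establish the two defining properties of an optimal code: that $\mathcal{C}_{1,\gamma}^*$ is zero-error, and that no zero-error code for $(k,M,1,\gamma)$ has more codewords. Since $\xi=1$ we may take all $Z_i=1$, so by \eqref{eq:def_indistinguishable} two vectors $\mathbf{y},\mathbf{y}'$ are indistinguishable exactly when $y_i/y_i'$ equals one and the same constant $\rho$ for every $i$ and $\rho\in[\gamma^{-1},\gamma]$. Two consequences will be used repeatedly: first, shrinking $\gamma$ only removes indistinguishable pairs, so anything distinguishable for $\gamma=\infty$ is distinguishable for every finite $\gamma$; second, each codeword of $\mathcal{C}_{1,\gamma}^*$ has the form $d\mathbf{x}$ with $\mathbf{x}\in\mathcal{C}_{1,\infty}^*$ and $d$ a multiplier in the chain defining $\mathcal{L}^\gamma_{\mathbf{x}}$, and here $d=\gcd(d\mathbf{x})$ while $\mathbf{x}=(d\mathbf{x})/\gcd(d\mathbf{x})$, so the pair $(\mathbf{x},d)$ is recoverable from the codeword; in particular the sets $\mathcal{L}^\gamma_{\mathbf{x}}$ are pairwise disjoint and $|\mathcal{C}_{1,\gamma}^*|=\sum_{\mathbf{x}\in\mathcal{C}_{1,\infty}^*}|\mathcal{L}^\gamma_{\mathbf{x}}|$.

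For the zero-error property, take two distinct codewords $d\mathbf{x},d'\mathbf{x}'\in\mathcal{C}_{1,\gamma}^*$. If $\mathbf{x}=\mathbf{x}'$, then $d\neq d'$ and both lie in the same chain $\mathcal{L}^\gamma_{\mathbf{x}}$; by the construction consecutive multipliers have ratio exceeding $\gamma$, so (telescoping) $\max(d/d',d'/d)>\gamma$, and the argument of the paragraph preceding the theorem shows $d\mathbf{x}$ and $d'\mathbf{x}$ are distinguishable. If $\mathbf{x}\neq\mathbf{x}'$, then $d\mathbf{x}$ has the same ratio vector as $\mathbf{x}$ and $d'\mathbf{x}'$ the same ratio vector as $\mathbf{x}'$; since $\mathbf{x},\mathbf{x}'$ are distinct codewords of the zero-error code $\mathcal{C}_{1,\infty}^*$, Lemma~\ref{lemma:ratios} forces these ratio vectors to differ, so $d\mathbf{x}$ and $d'\mathbf{x}'$ have distinct ratio vectors, hence are distinguishable for $\gamma=\infty$ and therefore for finite $\gamma$.

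For optimality, let $\mathcal{C}$ be any zero-error code and send each $\mathbf{y}\in\mathcal{C}$ to its primitive part $\mathbf{y}/\gcd(\mathbf{y})\in\mathcal{C}_{1,\infty}^*$. Fixing $\mathbf{x}\in\mathcal{C}_{1,\infty}^*$, the codewords of $\mathcal{C}$ with primitive part $\mathbf{x}$ are precisely $\{d\mathbf{x}:d\in D_{\mathbf{x}}\}$ for a set $D_{\mathbf{x}}$ of positive integers with $d\sum_i x_i\le M$ for each $d\in D_{\mathbf{x}}$, and by the indistinguishability criterion any two elements of $D_{\mathbf{x}}$ have ratio exceeding $\gamma$. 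The key claim is $|D_{\mathbf{x}}|\le|\mathcal{L}^\gamma_{\mathbf{x}}|$: writing $D_{\mathbf{x}}=\{e_1<\cdots<e_m\}$ and the chain as $d_1<\cdots<d_L$, one shows by induction that $e_j\ge d_j$ and that the chain survives to step $j$. Indeed $e_1\ge1=d_1$; and if $e_{j-1}\ge d_{j-1}$ then $e_j>\gamma e_{j-1}\ge\gamma d_{j-1}$, so the integer $e_j$ is at least $\lfloor\gamma d_{j-1}+1\rfloor=d_j$, while $d_j\le e_j\le e_m$ together with $e_m\sum_i x_i\le M$ gives $d_j\mathbf{x}\in\mathcal{X}$, so the chain does not stop before step $j$. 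Hence $L\ge m$. Summing over $\mathbf{x}$ and using the disjointness noted above, $|\mathcal{C}|=\sum_{\mathbf{x}}|D_{\mathbf{x}}|\le\sum_{\mathbf{x}}|\mathcal{L}^\gamma_{\mathbf{x}}|=|\mathcal{C}_{1,\gamma}^*|$.

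The membership bookkeeping (primitive parts staying in $\mathcal{X}$, chains being disjoint) is routine. The one step needing care is the greedy domination in the optimality argument: one must check simultaneously that each $e_j$ is at least the greedy value $d_j$ and that the greedy chain is guaranteed to run for at least $m$ steps, the latter being exactly where the constraint $e_m\sum_i x_i\le M$ enters. I expect that to be the main, though brief, obstacle.
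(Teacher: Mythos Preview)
Your proof is correct. The zero-error part matches the paper's argument essentially verbatim (split into same-primitive-part and different-primitive-part cases, invoking Lemma~\ref{lemma:ratios} for the latter).

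For optimality you take a genuinely different route. The paper constructs an explicit injection $\mathcal{C}\to\mathcal{C}^*_{1,\gamma}$: each codeword $\mathbf{y}$ with $d=\gcd(\mathbf{y})$ and primitive part $\bar{\mathbf{x}}=\mathbf{y}/d$ is sent to $\tilde{d}\bar{\mathbf{x}}$, where $\tilde{d}$ is the largest chain multiplier in $\mathcal{L}^\gamma_{\bar{\mathbf{x}}}$ not exceeding $d$; injectivity is then argued by contradiction. You instead partition $\mathcal{C}$ by primitive part and prove the fiberwise inequality $|D_{\mathbf{x}}|\le|\mathcal{L}^\gamma_{\mathbf{x}}|$ via a rank-based greedy domination: ordering $D_{\mathbf{x}}$ as $e_1<\cdots<e_m$, you show inductively that $e_j\ge d_j$ and that the greedy chain survives to step $j$. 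Both approaches ultimately establish an injection from each $D_{\mathbf{x}}$ into the corresponding chain, but the maps are different (the paper rounds each $e$ down to the nearest chain element below it; you match the $j$-th smallest $e$ with the $j$-th chain element). Your version has the advantage of making the role of the constraint $e_m\sum_i x_i\le M$ completely transparent---it is exactly what guarantees the chain does not terminate early---whereas the paper's contradiction argument leaves that dependence implicit.
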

In order to decode, the receiver first computes the ratios vector of the output. This uniquely identifies a vector $\mathbf{x}$ with $\gcd(\mathbf{x})=1$, or equivalently a set $\mathcal{L}_{\mathbf{x}}^{\gamma}$. Then the correct codeword in $\mathcal{L}_{\mathbf{x}}^{\gamma}$ can be decoded from any single run $Y_i$.


Equipped with this theorem, we compute the optimal code when the clock cycle is bounded between 4 and 7 ns, which are the actual system parameters in \cite{tabesh2015power}. The clock drift parameter is $\gamma=1.75$, which yields a rate of 10.76 bits per frame. 
It is interesting to note that, while this is an improvement over the code for $\gamma=\infty$, it is not particularly significant. Therefore, at least in this case, the fact that the clock drift is bounded does not provide a meaningful gain to capacity.
Finally, note that the best rate that can be achieved, even without clock drift, is 11.02 bits per frame. This is obtained by the optimal code with $\binom{M}{k}=\binom{65}{2}$ codewords.

\begin{proof}[Proof of Theorem~\ref{thm:no_jitter_bounded_drift}]
From arguments made in the previous section and by the construction of $\mathcal{L}_{\mathbf{x}}^{\gamma}$, it follows that $\mathcal{C}^*_{1,\gamma}$ is zero-error.
To show that it is an optimal code, we take an arbitrary zero-error code $\mathcal{C}$ and construct another code $\tilde{\mathcal{C}}$. Specifically, for each codeword $\mathbf{x}\in\mathcal{C}$, let $d=\gcd(\mathbf{x})$ and consider the vector 
\[\frac{\mathbf{x}}{d}=(\frac{x_1}{d},\ldots,\frac{x_k}{d})\in\mathcal{C}^*_{1,\infty}.\]
Let $\tilde{d}$ be the largest integer such that $\tilde{d}\leq d$ and
\[
\tilde{d}\frac{\mathbf{x}}{d}=
\left(\tilde{d}\frac{x_1}{d},\ldots,\tilde{d}\frac{x_k}{d}\right)\in\mathcal{L}^\gamma_{\mathbf{x}/d}.
\]
We map $\mathbf{x}$ to $\tilde{\mathbf{x}}=\tilde{d}\frac{\mathbf{x}}{d}$. The set of all vectors $\tilde{\mathbf{x}}$ constitutes the new code $\tilde{\mathcal{C}}$.

Clearly $\tilde{\mathcal{C}}\subseteq\mathcal{C}^*_{1,\gamma}$. It remains to show $|\tilde{\mathcal{C}}|=|\mathcal{C}|$, i.e. no two codewords in $\mathcal{C}$ map to the same codeword in $\tilde{\mathcal{C}}$. For this purpose, let $\mathbf{x}, \mathbf{x}'\in\mathcal{C}$ be two distinct codewords, and assume they map to the same codeword $\tilde{\mathbf{x}}\in\tilde{\mathcal{C}}$. Let $d=\gcd(\mathbf{x})$ and $d'=\gcd(\mathbf{x}')$.
First, notice that necessarily $\frac{\mathbf{x}}{d}=\frac{\mathbf{x}'}{d'}$, otherwise they cannot map to the same $\tilde{\mathbf{x}}$. Denote $\bar{\mathbf{x}}=\frac{\mathbf{x}}{d}=\frac{\mathbf{x}'}{d'}$.
Then, we have $\tilde{\mathbf{x}}=\tilde{d}\bar{\mathbf{x}}$, where $\tilde{d}$ is the largest integer $\tilde{d}\leq d$ and $\tilde{d}\leq d'$ s.t. $\tilde{d}\bar{\mathbf{x}}\in\mathcal{L}^\gamma_{\bar{\mathbf{x}}}$.
Since $\mathbf{x},\mathbf{x}'$ are distinct, we can assume without loss of generality $d<d'$.
By construction of $\mathcal{L}^\gamma_{\bar{\mathbf{x}}}$, we must have $\frac{d'}{\tilde{d}}\leq\gamma$, otherwise there must be another integer $q\leq d'$ s.t. $q\bar{\mathbf{x}}\in\mathcal{L}^\gamma_{\bar{\mathbf{x}}}$ and $\frac{q}{\tilde{d}}>\gamma$, in contradiction to the fact that $\tilde{d}$ is the largest such integer with $\tilde{d}\leq d'$. Along with the inequality $\tilde{d}\leq d$, it follows that $\frac{d'}{d}\leq\gamma$. This, in turn, implies that $\mathbf{x},\mathbf{x}'$ are indistinguishable, which contradicts the assumption that $\mathcal{C}$ is zero-error.
\end{proof}

\subsection{Jitter ($\xi>1$) and No Clock Drift ($\gamma=1$)}
\label{subsec:jitter_no_clock_drift}

When $\gamma=1$, the problem reduces to the one studied in~\cite{yeung2009reliable}.
Nevertheless, we provide here the code construction and proofs for completeness.

\begin{lemma}\label{lemma:distinguishable_jitter_no_drift}
For a channel with $\xi>1$ and $\gamma=1$, two input vectors $\mathbf{x},\mathbf{x}'$ are distinguishable if and only if there is an index $1\leq i\leq k$ such that $x_i/x'_i>\xi$ or $x'_i/x_i>\xi$.
\end{lemma}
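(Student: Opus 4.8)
The plan is to characterize indistinguishability directly from the definition in \eqref{eq:def_indistinguishable} specialized to $\gamma=1$, so that $T=T'$ (since $T_2/T_1=1$ forces a single clock value) and the condition becomes: $\mathbf{x},\mathbf{x}'$ are indistinguishable iff there exist $Z_i,Z'_i\in[a,b]$ with $Z_i x_i = Z'_i x'_i$ for all $i$. Dividing, this says $x_i/x'_i = Z'_i/Z_i$, and as the $Z_i,Z'_i$ range freely over $[a,b]$ the ratio $Z'_i/Z_i$ ranges exactly over $[1/\xi,\xi]$ where $\xi=b/a$. So indistinguishability is equivalent to $x_i/x'_i\in[\xi^{-1},\xi]$ for \emph{every} $i$, i.e. $\max(x_i/x'_i, x'_i/x_i)\le\xi$ for all $i$. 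Negating gives precisely the claim: distinguishability holds iff there is some index $i$ with $x_i/x'_i>\xi$ or $x'_i/x_i>\xi$.

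Concretely I would carry this out in two directions. First, the ``if'' direction: suppose such an index $i$ exists, say $x_i/x'_i>\xi$; I would show $\mathbf{x},\mathbf{x}'$ cannot be indistinguishable, because any putative $Z_i,Z'_i\in[a,b]$ with $TZ_ix_i=T'Z'_ix'_i$ and $T=T'$ (forced by $\gamma=1$) would give $Z'_i/Z_i = x_i/x'_i > \xi = b/a$, which is impossible since $Z'_i\le b$ and $Z_i\ge a$. Second, the ``only if'' direction: suppose for all $i$ we have $x_i/x'_i\le\xi$ and $x'_i/x_i\le\xi$; I would exhibit explicit jitter values realizing a common output. Take $T=T'=T_1$. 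For each $i$, since $x_i/x'_i\in[\xi^{-1},\xi]$ I can choose $Z_i,Z'_i\in[a,b]$ with $Z'_i/Z_i = x_i/x'_i$ — for instance if $x_i\ge x'_i$ set $Z'_i=b$ and $Z_i = b\,x'_i/x_i\in[b/\xi,b]=[a,b]$, and symmetrically otherwise. Then $Z_ix_i = Z'_ix'_i$ for every $i$, so \eqref{eq:def_indistinguishable} holds and the vectors are indistinguishable, completing the contrapositive.

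The only subtlety — and it is minor — is making explicit that $\gamma=1$ genuinely pins $T=T'$ to the same value rather than merely $T/T'\in[\gamma^{-1},\gamma]=\{1\}$; this follows because $[T_1,T_2]$ with $T_2=T_1$ is the single point $\{T_1\}$, so both $T$ and $T'$ equal $T_1$ and the clock-drift multiplier cancels from \eqref{eq:def_indistinguishable} entirely. After that observation the argument is the elementary interval computation above. I do not anticipate a real obstacle here; the main care is just in the bookkeeping of which of $Z_i,Z'_i$ to push to the boundary $b$ so that the other stays within $[a,b]$, and in stating the equivalence as the negation of the theorem's condition so the logical direction matches the ``distinguishable'' phrasing.
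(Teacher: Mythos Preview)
Your proposal is correct and follows essentially the same approach as the paper's proof: reduce to $T=T'$ (the paper simply assumes $T=1$ without loss of generality), rewrite indistinguishability as $x_i/x'_i=Z'_i/Z_i$ for all $i$, observe this ratio ranges over $[\xi^{-1},\xi]$, and negate. Your version is more explicit about the construction of the $Z_i,Z'_i$ and the handling of the two logical directions, but the underlying argument is identical.
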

\begin{proof}
We can assume without loss of generality that $T=1$.
Then, two input vectors $\mathbf{x}$, $\mathbf{x}'$ are indistinguishable if and only if there exist $Z_1,\ldots,Z_k,Z'_1,\ldots,Z'_k\in [a,b]$ such that
$Z_ix_i=Z'_ix'_i$, or equivalently $\frac{x_i}{x'_i}=\frac{Z'_i}{Z_i}$, for every $i=1,\ldots,k$.
This, in turn, holds if and only if $\xi^{-1}\leq\frac{x_i}{x'_i}\leq\xi$ for all $i$, completing the proof.
\end{proof}
Note that now, since jitter can alter each run independently, there needs to be at least one run that is distinct (up to ``stretching'' or ``squeezing'' by $\xi$) between two input vectors in order for them to be distinguishable.
It is, in general, harder to distinguish between vectors corrupted by timing jitter as compared to clock drift.
For example, for $k=2$, $\gamma=1$, and $\xi=2$, the following vectors are all indistinguishable: $(1,1)$, $(1,2)$, $(2,1)$, $(2,2)$. On the other hand, if $\xi=1$ and $\gamma=2$, only $(1,1)$ and $(2,2)$ are indistinguishable, while $(1,1)$ and $(1,2)$ are distinguishable for example.

Similarly to the construction in the previous section, we construct the set $\mathcal{L}^\xi_{1}$:
\begin{enumerate}
\item Start with $\mathcal{L}^\xi_1=\{1\}$ and let $l_1=1$.
\item Given $l_{i-1}$, set $l_i=\lfloor\xi l_{i-1}+1\rfloor$, which is the smallest integer s.t. $l_i/l_{i-1}>\xi$.
\item If $l_i\leq M$, add it to $\mathcal{L}_1^\xi$ and repeat step 2. Otherwise, stop the construction.
\end{enumerate}

An optimal code can be constructed by allowing each codeword to contain runs only from $\mathcal{L}_1^\xi$.
The following theorem is similar to \cite[Theorem~1]{yeung2009reliable}.
\begin{theorem}\label{thm:jitter_no_clock_drift}
An optimal zero-error code for a channel with $\xi>1$ and $\gamma=1$ is given by
\begin{equation}
\mathcal{C}^*_{\xi,1}=\big\{\mathbf{x}\in\mathcal{X}:\ 
x_i\in\mathcal{L}_1^\xi,\ i=1,\ldots,k\big\}.
\label{eq:optimal_code_jitter_no_drift}
\end{equation}
\end{theorem}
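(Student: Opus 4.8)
The plan is to follow the same two-part template as in the proofs of Theorems~\ref{thm:no_jitter_unbounded_drift} and~\ref{thm:no_jitter_bounded_drift} (and as in \cite[Theorem~1]{yeung2009reliable}): first check that $\mathcal{C}^*_{\xi,1}$ is a zero-error code, then prove optimality by constructing an injection from an arbitrary zero-error code into it.

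For the first part, I would begin by noting that any two distinct elements of $\mathcal{L}_1^\xi$ have ratio (larger over smaller) strictly greater than $\xi$: the construction gives $l_i>\xi l_{i-1}$ for consecutive elements, so for $l_i>l_j$ in $\mathcal{L}_1^\xi$ one has $l_i>\xi l_{i-1}\ge\xi l_j$. Then, given two distinct codewords $\mathbf{x},\mathbf{x}'\in\mathcal{C}^*_{\xi,1}$, they differ in some coordinate $i$, and since $x_i,x'_i\in\mathcal{L}_1^\xi$ are distinct, either $x_i/x'_i>\xi$ or $x'_i/x_i>\xi$; Lemma~\ref{lemma:distinguishable_jitter_no_drift} then makes them distinguishable.

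For optimality, I would map each codeword $\mathbf{x}$ of an arbitrary zero-error code $\mathcal{C}$ to the vector $\tilde{\mathbf{x}}$ obtained by rounding every run down to $\mathcal{L}_1^\xi$, i.e. $\tilde{x}_i=\max\{l\in\mathcal{L}_1^\xi:\ l\le x_i\}$ (well-defined since $1\in\mathcal{L}_1^\xi$ and $x_i\ge1$). Since $1\le\tilde{x}_i\le x_i$, we get $\tilde{\mathbf{x}}\in\mathcal{X}$ with all runs in $\mathcal{L}_1^\xi$, hence $\tilde{\mathbf{x}}\in\mathcal{C}^*_{\xi,1}$, so the image code $\tilde{\mathcal{C}}$ satisfies $\tilde{\mathcal{C}}\subseteq\mathcal{C}^*_{\xi,1}$. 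The crux is then to show this map is injective on $\mathcal{C}$: if two distinct codewords $\mathbf{x},\mathbf{x}'$ had $\tilde{\mathbf{x}}=\tilde{\mathbf{x}}'$, then for every $i$ both $x_i$ and $x'_i$ round down to the same $l=\tilde{x}_i\in\mathcal{L}_1^\xi$, which forces $\xi^{-1}\le x_i/x'_i\le\xi$ for every $i$ (see below), and Lemma~\ref{lemma:distinguishable_jitter_no_drift} would then make $\mathbf{x}$ and $\mathbf{x}'$ indistinguishable, contradicting zero-error; hence $|\mathcal{C}|=|\tilde{\mathcal{C}}|\le|\mathcal{C}^*_{\xi,1}|$.

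The one step that needs care --- and the place I expect the real (if modest) work to be --- is establishing that two integers rounding down to the same $l\in\mathcal{L}_1^\xi$ differ by a factor at most $\xi$. If $l$ is not the largest element of $\mathcal{L}_1^\xi$, its successor is $l'=\lfloor\xi l+1\rfloor\le\xi l+1$, so $l\le x_i,x'_i\le l'-1\le\xi l$ and the ratio bound follows. If $l$ is the largest element, the construction terminated because $\lfloor\xi l+1\rfloor>M$, which yields $M\le\xi l$, so $l\le x_i,x'_i\le M\le\xi l$ and the same bound holds. Everything else is routine bookkeeping identical to the earlier theorems, including the observation that the receiver decodes by rounding each received run down to $\mathcal{L}_1^\xi$.
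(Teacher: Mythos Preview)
Your proposal is correct and follows essentially the same approach as the paper: both prove zero-error via Lemma~\ref{lemma:distinguishable_jitter_no_drift} applied to a coordinate where the codewords differ, and both prove optimality by rounding each run of an arbitrary zero-error codeword down to $\mathcal{L}_1^\xi$ and showing the resulting map is injective. Your version is in fact slightly more careful than the paper's, since you explicitly verify the ratio property for non-consecutive elements of $\mathcal{L}_1^\xi$, check that $\tilde{\mathbf{x}}\in\mathcal{X}$, and treat separately the edge case where $l$ is the maximal element of $\mathcal{L}_1^\xi$; the paper handles all of these in a single terse line.
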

\begin{proof}
Observe that this is a zero-error code, since for any two distinct codewords $\mathbf{x},\mathbf{x}'\in\mathcal{C}^*_{\xi,1}$ there is at least one $i\in\{1,\ldots,k\}$ for which $x_i\neq x'_i$. Since both $x_i$ and $x'_i$ are in $\mathcal{L}_1^{\xi}$, this implies that either $x_i/x'_i>\xi$ or $x'_i/x_i>\xi$, which, according to Lemma~\ref{lemma:distinguishable_jitter_no_drift}, means that $\mathbf{x}$ and $\mathbf{x}'$ are distinguishable.

Next, we show that this is an optimal code by modifying an arbitrary zero-error code $\mathcal{C}$ in a similar manner to the proof of Theorem~\ref{thm:no_jitter_bounded_drift}.
Specifically, we construct the code $\tilde{\mathcal{C}}$ by {mapping each codeword $\mathbf{x}\in\mathcal{C}$ to the codeword $\tilde{\mathbf{x}}=(\tilde{x}_1,\ldots,\tilde{x}_k)$, where $\tilde{x}_i$ is the largest element in $\mathcal{L}_1^\xi$ such that $\tilde{x}_i\leq x_i$.}

Clearly $\tilde{\mathcal{C}}\subset\mathcal{C}^*_{\xi,1}$, therefore $|\tilde{\mathcal{C}}|\leq|\mathcal{C}^*_{\xi,1}|$.
Then it remains to show that we do not lose anything by modifying $\mathcal{C}$ to $\tilde{\mathcal{C}}$, i.e. no two codewords in $\mathcal{C}$ map to the same codeword in $\tilde{\mathcal{C}}$.
This will imply $\mathcal{C}$ has at most as many codewords as $\mathcal{C}^*_{\xi,1}$, which will conclude the proof that $\mathcal{C}^*_{\xi,1}$ is an optimal code.

Let $\mathbf{x},\mathbf{x}'\in\mathcal{C}$ be two distinct codewords, and assume they are mapped to the same codeword $\tilde{\mathbf{x}}\in\tilde{\mathcal{C}}$.
Therefore for every $i\in\{1,\ldots,k\}$, the element $\tilde{x}_i\in\mathcal{L}_1^{\xi}$ is the maximal such that $\tilde{x}_i\leq x_i$ and $\tilde{x}_i\leq x'_i$.
By construction of $\mathcal{L}_1^{\xi}$, it follows that $x_i,x'_i<\lfloor \xi\tilde{x}_i+1\rfloor$, which implies $\tilde{x}_i\leq x_i,x'_i\leq \xi\tilde{x}_i$.
Hence, $\xi^{-1}\leq\frac{x_i}{x'_i}\leq \xi$ for every $1\leq i\leq k$, and it follows from Lemma~\ref{lemma:distinguishable_jitter_no_drift} that $\mathbf{x}$ and $\mathbf{x}'$ are indistinguishable, which is a contradiction to the assumption that $\mathcal{C}$ is zero-error.
\end{proof}

The nature of jitter requires different coding and decoding techniques as compared to clock drift. When only clock drift is present, the receiver needs to wait for the entire signal before it can decode (which is done by computing the ratios vector).
When there is jitter but no clock drift, i.e. the clock cycle is known exactly at the receiver, the receiver can decode each run independently, and does not have to wait for the entire output vector.
As will be seen in the following section, this poses an interesting challenge when both jitter and clock drift occur.

\subsection{Jitter ($\xi>1$) and Unbounded Clock Drift ($\gamma=\infty$)}
\label{subsec:jitter_unbounded_drift}

We solve this only for the case of $k=2$. In the following lemma, we state a necessary and sufficient condition for two input vectors to be distinguishable at the receiver.
\begin{lemma}\label{lemma:distinguishable_jitter_unbounded_drift}
A pair of input vectors $(x_1,x_2)$ and $(x'_1,x'_2)$ are distinguishable for a channel with $k=2$, $\xi>1$, and $\gamma=\infty$, if and only if
$\frac{x_2}{x_1} > \xi^2 \frac{x'_2}{x'_1}$
or
$\frac{x'_2}{x'_1} > \xi^2\frac{x_2}{x_1}$.
\end{lemma}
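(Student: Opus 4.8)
The plan is to translate the indistinguishability condition \eqref{eq:def_indistinguishable} for $k=2$ and $\gamma=\infty$ into a constraint on the single ratio $x_2/x_1$ versus $x'_2/x'_1$, and then take the contrapositive. Setting $Z_i=1$ is no longer valid here since jitter is present, but we can still exploit the fact that only the ratio $\xi=b/a$ matters: after scaling, we may assume $Z_i,Z_i'\in[1,\xi]$ (or, more conveniently, $Z_i/Z_i'\in[\xi^{-1},\xi]$). Since $\gamma=\infty$, the drift factors $T,T'$ are completely free positive reals. So $(x_1,x_2)$ and $(x_1',x_2')$ are indistinguishable iff there exist positive reals $T,T'$ and $Z_1,Z_2,Z_1',Z_2'\in[a,b]$ with $TZ_ix_i = T'Z_i'x_i'$ for $i=1,2$.

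The key step is to eliminate $T$ and $T'$ by dividing the $i=2$ equation by the $i=1$ equation: this gives $\frac{Z_2 x_2}{Z_1 x_1} = \frac{Z_2' x_2'}{Z_1' x_1'}$, i.e. $\frac{x_2/x_1}{x_2'/x_1'} = \frac{Z_2' Z_1}{Z_1' Z_2}$. Conversely, given any value of the right-hand side realized by admissible $Z$'s, one can always choose $T,T'>0$ to satisfy the two original equations (pick $T$ arbitrary, then solve for $T'$ from the $i=1$ equation; the $i=2$ equation is then automatically the ratio condition). Hence indistinguishability is equivalent to: the ratio $\rho := \frac{x_2/x_1}{x_2'/x_1'}$ lies in the set of achievable values of $\frac{Z_1 Z_2'}{Z_2 Z_1'}$ with all four $Z$'s in $[a,b]$. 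Since $Z_1/Z_2 \in [\xi^{-1},\xi]$ and $Z_2'/Z_1' \in [\xi^{-1},\xi]$ independently, their product ranges exactly over $[\xi^{-2},\xi^2]$, and every value in this closed interval is attained. Therefore the two vectors are indistinguishable iff $\xi^{-2} \le \rho \le \xi^2$, and distinguishable iff $\rho > \xi^2$ or $\rho < \xi^{-2}$, which is precisely $\frac{x_2}{x_1} > \xi^2 \frac{x_2'}{x_1'}$ or $\frac{x_2'}{x_1'} > \xi^2 \frac{x_2}{x_1}$.

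I expect the only real subtlety to be the careful bookkeeping in the ``conversely'' direction — verifying that once the ratio condition is met, one genuinely can back out valid $T,T'$ (which is immediate because $T$ is unconstrained, so this is where $\gamma=\infty$ is used), and that the product $\frac{Z_1}{Z_2}\cdot\frac{Z_2'}{Z_1'}$ really does sweep the full interval $[\xi^{-2},\xi^2]$ rather than some smaller set. The latter follows because the two factors are chosen from independent pairs of variables, so the range of the product is the product of the ranges; compactness of $[a,b]$ guarantees the extreme values $\xi^2$ and $\xi^{-2}$ are achieved, matching the strict inequalities in the distinguishability statement. Everything else is routine algebra, and the argument parallels the structure of the proof of Lemma~\ref{lemma:distinguishable_jitter_no_drift}, with the clock-drift degree of freedom replacing one of the run constraints.
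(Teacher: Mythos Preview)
Your proposal is correct and follows essentially the same route as the paper: both arguments divide the two indistinguishability equations to eliminate $T,T'$, observe that the resulting $Z$-ratio product $\frac{Z_1Z_2'}{Z_2Z_1'}$ ranges exactly over $[\xi^{-2},\xi^2]$, and then use the unbounded drift to back-solve for admissible $T,T'$. The only cosmetic slip is calling $T,T'$ ``completely free positive reals'' (they lie in $[T_1,\infty)$), but since any positive ratio $T'/T$ is still attainable this does not affect the argument, and the paper makes the same observation.
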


Intuitively, when the clock drift is unbounded, distinguishable vectors must have distinct ratios $\frac{x_2}{x_1}$, hence vectors can be equally represented by their appropriate ratio.
However, when jitter corrupts the signal, the numerator and the denominator can ``stretch'' or ``squeeze'' independently, by a factor $\xi$ each. Together, the ratio can change by up to a factor of $\xi^2$.


\begin{proof}[Proof of Lemma~\ref{lemma:distinguishable_jitter_unbounded_drift}]
We will show that if two vectors satisfy $\frac{x_2}{x_1}\leq\xi^2\frac{x'_2}{x'_1}$ and $\frac{x'_2}{x'_1}\leq\xi^2\frac{x_2}{x_1}$ then they are indistinguishable.
For this purpose, we need to find $T,T'\in[T_1,\infty)$ and $Z_1,Z_2,Z'_1,Z'_2\in[a,b]$, where $a,b,T_1>0$ and $b/a=\xi$, such that \eqref{eq:def_indistinguishable} holds, i.e. $TZ_1x_1=T'Z'_1x'_1$ and $TZ_2x_2=T'Z'_2x'_2$.

Observe that $\xi^{-2}\leq\frac{x_2x'_1}{x_1x'_2}\leq\xi^2$.
Let $Z_1,Z_2,Z'_1,Z'_2$ be such that $\frac{Z_1Z'_2}{Z_2Z'_1}=\frac{x_2x'_1}{x_1x'_2}$; these exist since 
\[\xi^{-2}=\frac{a^2}{b^2}\leq\frac{Z_1Z'_2}{Z_2Z'_1}\leq\frac{b^2}{a^2}=\xi^2.\]
Having fixed $Z_1,Z_2,Z'_1,Z'_2$, find $T,T'$ such that $\frac{T'}{T}=\frac{Z_1x_1}{Z'_1x'_1}$.
This is possible since the ratio $\frac{T'}{T} $ can take any positive number. Now, we have
\[
\frac{T'Z'_2x'_2}{TZ_2x_2}
=\frac{Z_1x_1}{Z'_1x'_1}\frac{Z'_2x'_2}{Z_2x_2}
=1,
\]
implying that $(x_1,x_2)$ and $(x'_1,x'_2)$ are indistinguishable.

The other direction, namely that if $(x_1,x_2)$ and $(x'_1,x'_2)$ are indistinguishable then $\xi^{-2}\frac{x'_2}{x'_1}\leq\frac{x_2}{x_1}\leq\xi^2\frac{x'_2}{x'_1}$, follows by repeating the previous arguments in the reverse direction.
\end{proof}

Since distinguishable codewords must have distinct ratios (whether jitter is present or not), we can, without loss of generality, take only codewords for which $\gcd(x_1,x_2)=1$.
Hence we can construct an optimal code for this channel by taking a subset of the optimal code for the channel without jitter, that is $\mathcal{C}^*_{\xi,\infty}\subseteq\mathcal{C}^*_{1,\infty}$.
Since the ratios $\frac{x_2}{x_1}$ of all codewords in $\mathcal{C}^*_{1,\infty}$ are distinct, we define the following set of fractions:
\begin{align*}
\mathcal{U}
&=\big\{\tfrac{x_2}{x_1}:\ (x_1,x_2)\in\mathcal{C}^*_{1,\infty}\big\}.
\end{align*}
There is a one-to-one mapping between $\mathcal{U}$ and $\mathcal{C}_{1,\infty}^*$.
Using the set $\mathcal{U}$, we construct an optimal code $\mathcal{C}^*_{\xi,\infty}$ by means of the following algorithm:
\begin{enumerate}
\item Start with $\mathcal{C}_{\xi,\infty}^*=\{(M-1,1)\}$ and let $u_1=\frac{1}{M-1}$, which is the smallest element in $\mathcal{U}$.
\item Given $u_{i-1}$, {consider the set of all elements $u\in\mathcal{U}$ s.t. $u>\xi^2 u_{i-1}$, or in other words, the set $\mathcal{U}\cap(\xi^2 u_{i-1},\infty)$, where $(\xi^2 u_{i-1},\infty)$ denotes an open interval.}
If the set is empty, stop the construction. Otherwise, let $(x_1,x_2)\in\mathcal{X}$ be the single vector s.t. $\gcd(x_1,x_2)=1$ and $\frac{x_2}{x_1}$ is the smallest element in $\mathcal{U}\cap(\xi^2 u_{i-1},\infty)$. Set $u_i=\frac{x_2}{x_1}$, add $(x_1,x_2)$ to $\mathcal{C}^*_{\xi,\infty}$, and repeat this step.
\end{enumerate}
This construction, while similar to the constructions $\mathcal{L}_{\mathbf{x}}^{\gamma}$ and $\mathcal{L}_1^\xi$ from the previous sections, operates on $\mathcal{U}$ which is a set of fractions, rather than on vectors or elements of vectors.
It is somewhat surprising that, given the different structure of $\mathcal{U}$ as compared to the set of vectors, this construction is indeed optimal, as stated formally in the following theorem.
\begin{theorem}
The code $\mathcal{C}_{\xi,\infty}^*$ obtained by the above construction is an optimal code for a channel with $k=2$, $\xi>1$, and $\gamma=\infty$.
\end{theorem}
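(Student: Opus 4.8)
The plan is to reduce the problem to a one‑dimensional packing question on the set $\mathcal{U}$ and then analyze the greedy construction by an exchange argument. First I would check that $\mathcal{C}^*_{\xi,\infty}$ is itself a zero‑error code: by construction any two consecutive ratios satisfy $u_i>\xi^2 u_{i-1}$, and since the $u_i$ are strictly increasing, for $j>i$ one gets $u_j>\xi^2 u_{j-1}\ge \xi^2 u_i$; hence every pair of codewords satisfies the condition of Lemma~\ref{lemma:distinguishable_jitter_unbounded_drift} and is distinguishable.

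For optimality I would first argue, exactly as in the proofs of the preceding theorems, that it suffices to compare against codes consisting only of vectors with $\gcd(x_1,x_2)=1$. Given an arbitrary zero‑error code $\mathcal{C}$, map each $(x_1,x_2)\in\mathcal{C}$ to $(x_1/d,x_2/d)$ with $d=\gcd(x_1,x_2)$; this preserves the ratio $x_2/x_1$, hence by Lemma~\ref{lemma:distinguishable_jitter_unbounded_drift} preserves all distinguishability relations, so the image is still zero‑error, and the map is injective because two codewords with equal ratio would be indistinguishable, contradicting zero‑errorness. Thus $|\mathcal{C}|$ equals the size of a subset $S\subseteq\mathcal{U}$ with the property that any two $u\ne u'\in S$ satisfy $\max(u/u',u'/u)>\xi^2$; call such an $S$ \emph{admissible}. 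The task reduces to showing that the greedy construction produces an admissible subset of maximum cardinality.

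The core step is an exchange argument on admissible subsets of $\mathcal{U}$. Let $G=\{g_1<\cdots<g_m\}$ be the greedy output and $O=\{o_1<\cdots<o_n\}$ any admissible subset; I claim $m\ge n$. By induction I would show $g_i\le o_i$ for all $i\le m$: the base case holds since $g_1=\frac{1}{M-1}$ is the minimum of $\mathcal{U}$; for the step, admissibility of $O$ gives $o_i>\xi^2 o_{i-1}\ge \xi^2 g_{i-1}$, so $o_i\in\mathcal{U}\cap(\xi^2 g_{i-1},\infty)$, and since $g_i$ is by construction the smallest element of this set, $g_i\le o_i$. Finally, if $n>m$ then $o_{m+1}$ exists with $o_{m+1}>\xi^2 o_m\ge\xi^2 g_m$, so $\mathcal{U}\cap(\xi^2 g_m,\infty)\ne\emptyset$, contradicting the fact that the greedy construction terminated after producing $g_m$. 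Hence $m\ge n$, and $\mathcal{C}^*_{\xi,\infty}$ is optimal.

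The main obstacle is conceptual rather than technical: one must recognize that, although $\mathcal{U}$ is an irregular set of fractions lacking the arithmetic regularity of the sets $\mathcal{L}^\gamma_{\mathbf{x}}$ and $\mathcal{L}^\xi_1$ used earlier, the greedy ``leftmost‑feasible'' rule is optimal for the packing problem $\max\{|S|:S\subseteq\mathcal{U},\ \forall u\ne u'\in S,\ \max(u/u',u'/u)>\xi^2\}$ on \emph{any} finite set of positive reals (after taking logarithms this is the classical greedy for choosing a maximum set of points pairwise farther apart than a fixed distance). Once the reduction to this packing problem is made explicit the exchange argument is routine; the only care needed is in the reduction itself, i.e., verifying that passing to reduced fractions neither creates collisions nor destroys the zero‑error property, which is immediate from Lemma~\ref{lemma:distinguishable_jitter_unbounded_drift}.
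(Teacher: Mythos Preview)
Your proof is correct and follows essentially the same strategy as the paper: reduce the problem to selecting a maximum admissible subset of the ratio set $\mathcal{U}$, then show that the greedy construction is optimal for this one-dimensional packing. The only cosmetic difference is that the paper establishes optimality via an injective ``floor'' map (sending each codeword of an arbitrary zero-error code $\mathcal{C}$ to the codeword of $\mathcal{C}^*_{\xi,\infty}$ with the largest ratio not exceeding its own, and arguing injectivity from Lemma~\ref{lemma:distinguishable_jitter_unbounded_drift}), whereas you use the equivalent ``greedy stays ahead'' induction $g_i\le o_i$ followed by a termination contradiction.
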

\begin{proof}
By Lemma~\ref{lemma:distinguishable_jitter_unbounded_drift}, this code is zero-error.
To show that it is optimal, take any zero-error code $\mathcal{C}$ and construct the code $\tilde{\mathcal{C}}$ as follows:
for every codeword $(x_1,x_2)\in\mathcal{C}$, let $(\tilde{x}_1,\tilde{x}_2)$ be the codeword in $\mathcal{C}^*_{\xi,\infty}$ with the largest ratio $\frac{\tilde{x}_2}{\tilde{x}_1}$ such that $\frac{\tilde{x}_2}{\tilde{x}_1}\leq\frac{x_2}{x_1}$.
Clearly $\tilde{\mathcal{C}}\subseteq\mathcal{C}^*_{\xi,\infty}$, thus it remains to show that no two codewords in $\mathcal{C}$ map to the same codeword $(\tilde{x}_1,\tilde{x}_2)$, implying that $|\mathcal{C}|=|\tilde{\mathcal{C}}|$ and consequently $|\mathcal{C}|\leq|\mathcal{C}^*_{\xi,\infty}|$.

Let $(x_1,x_2)$ and $(x'_1,x'_2)$ be two distinct codewords in $\mathcal{C}$, and suppose they map to the same codeword $(\tilde{x}_1,\tilde{x}_2)\in\tilde{\mathcal{C}}$.
Since $\mathcal{C}$ is zero-error, $(x_1,x_2)$ and $(x'_1,x'_2)$ are distinguishable. Hence, by Lemma~\ref{lemma:distinguishable_jitter_unbounded_drift}, we can assume without loss of generality $\xi^2\frac{x_2}{x_1}<\frac{x'_2}{x'_1}$.
By definition of $\tilde{\mathcal{C}}$, we have $\frac{\tilde{x}_2}{\tilde{x}_1}\leq\frac{x_2}{x_1}$. It follows that 
$\xi^2\frac{\tilde{x}_2}{\tilde{x}_1}\leq\xi^2\frac{x_2}{x_1}<\frac{x'_2}{x'_1}$.
Then, since $\frac{x'_2}{x'_1}\in\mathcal{U}$, we have in particular $\frac{x'_2}{x'_1}\in\mathcal{U}\cap(\xi^2\frac{\tilde{x}_2}{\tilde{x}_1},\infty)$. Since $(\tilde{x}_1,\tilde{x}_2)\in\mathcal{C}^*_{\xi,\infty}$, there must be a codeword $(\hat{x}_1,\hat{x}_2)$ s.t. $\gcd(\hat{x}_1,\hat{x}_2)=1$ and $\xi^2\frac{\tilde{x}_2}{\tilde{x}_1}<\frac{\hat{x}_1}{\hat{x}_2}\leq\frac{x'_2}{x'_1}$.
This contradicts the assumption that $(\tilde{x}_1,\tilde{x}_2)$ is the codeword with the largest ratio $\frac{\tilde{x}_2}{\tilde{x}_1}$ s.t. $\frac{\tilde{x}_2}{\tilde{x}_1}\leq\frac{x'_2}{x'_1}$.
\end{proof}

\subsection{Jitter ($\xi>1$) and Bounded Clock Drift ($\gamma < \infty$)}
For this case, which is the most general, it is difficult to obtain exact characterization of the optimal code. Nevertheless, we provide an achievable zero-error code for the case of $k=2$.

Recall the construction of $\mathcal{L}^\gamma_{\mathbf{x}}$ from Section~\ref{subsec:no_jitter_bounded_drift}, defined for a tuple $\mathbf{x}$ with $\gcd(\mathbf{x})=1$.
Here, since each element $Y_i=TZ_ix_i$ can change by a factor of $\gamma\xi$, we take the same construction but with parameter $\gamma\xi$, namely $\mathcal{L}^{\gamma\xi}_{\mathbf{x}}$.
In the following theorem we build a (possibly suboptimal) zero-error code using these sets and the optimal code for $\gamma=\infty$ developed in the previous section.
\begin{theorem}\label{thm:lower_bound}
The following code is zero-error for a channel with $k=2$, $\xi>1$, and $\gamma<\infty$:
\begin{equation}
\mathcal{C}_{\xi,\gamma} = \bigcup_{\mathbf{x}\in\mathcal{C}^*_{\xi,\infty}} \mathcal{L}_{\mathbf{x}}^{\gamma\xi},
\end{equation}
where $\mathcal{L}_{\mathbf{x}}^{\gamma\xi}$ is defined in Section~\ref{subsec:no_jitter_bounded_drift} and $\mathcal{C}^*_{\xi,\infty}$ is defined in Section~\ref{subsec:jitter_unbounded_drift}.
\end{theorem}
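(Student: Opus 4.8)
The plan is to show directly that any two distinct codewords of $\mathcal{C}_{\xi,\gamma}$ are distinguishable, splitting into two cases according to whether or not they share a common primitive vector. Every codeword has the form $d\mathbf{x}$ with $\mathbf{x}\in\mathcal{C}^*_{\xi,\infty}$ (so $\gcd(\mathbf{x})=1$) and $d$ one of the multipliers appearing in the chain $\mathcal{L}^{\gamma\xi}_{\mathbf{x}}$; since $\gcd(d\mathbf{x})=d$, both $d$ and $\mathbf{x}$ are recovered from the codeword, so two distinct codewords are either of the form $d\mathbf{x},d'\mathbf{x}$ with the same primitive $\mathbf{x}$ and $d\neq d'$, or of the form $d\mathbf{x},d'\mathbf{x}'$ with $\mathbf{x}\neq\mathbf{x}'$. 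I would first record the elementary fact that any two distinct multipliers occurring in a chain $\mathcal{L}^{\gamma\xi}_{\mathbf{x}}$ differ by a factor strictly exceeding $\gamma\xi$: consecutive ones do by construction (step~2 with parameter $\gamma\xi$), and a product of such factors is again $>\gamma\xi$ because $\gamma\xi\geq1$.

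For the first case (common primitive, hence equal ratios $x_2/x_1$), I would argue exactly as in Section~\ref{subsec:no_jitter_bounded_drift} but with the drift bound $\gamma$ replaced by $\gamma\xi$. Suppose $d\mathbf{x}$ and $d'\mathbf{x}$ were indistinguishable, say with $d'>d$ without loss of generality. Then there are $T,T'\in[T_1,T_2]$ and $Z_i,Z'_i\in[a,b]$ with $TZ_i(dx_i)=T'Z'_i(d'x_i)$, whence $\frac{T'}{T}\cdot\frac{d'}{d}=\frac{Z_i}{Z'_i}\in[\xi^{-1},\xi]$ for $i=1,2$. But $T'/T\geq\gamma^{-1}$ and, by the chain fact above, $d'/d>\gamma\xi$, so $\frac{T'}{T}\cdot\frac{d'}{d}>\gamma^{-1}\cdot\gamma\xi=\xi$, a contradiction; hence these codewords are distinguishable.

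For the second case (distinct primitives) I would combine two observations. First, distinguishability for the channel with $\gamma=\infty$ depends only on the ratios $x_2/x_1$ by Lemma~\ref{lemma:distinguishable_jitter_unbounded_drift}, and is therefore unchanged if each vector is multiplied by an arbitrary positive integer. Second, if a pair of vectors is distinguishable for the channel with $\gamma=\infty$ then it is distinguishable for any channel with $\gamma<\infty$ and the same $\xi$, since the admissible set of clock pairs $(T,T')$ only shrinks when $\gamma$ decreases, so indistinguishability can only be lost. Now $\mathbf{x},\mathbf{x}'\in\mathcal{C}^*_{\xi,\infty}$ are distinct codewords of a zero-error code for the $\gamma=\infty$ channel, hence distinguishable there; by scaling-invariance $d\mathbf{x}$ and $d'\mathbf{x}'$ are also distinguishable for $\gamma=\infty$ for any $d,d'\geq1$; and by the monotonicity they are distinguishable for the present channel. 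This settles the second case, completing the proof that $\mathcal{C}_{\xi,\gamma}$ is zero-error.

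I do not anticipate a real obstacle: the argument is a combination of the geometric-chain bookkeeping already used in Theorems~\ref{thm:no_jitter_bounded_drift} and~\ref{thm:jitter_no_clock_drift}. The only point requiring care is the reduction in the second case — applying scaling-invariance to the \emph{unbounded}-drift channel and stating the $\gamma$-monotonicity in the correct direction (indistinguishability is preserved as $\gamma$ grows, so distinguishability is inherited downward from $\gamma=\infty$).
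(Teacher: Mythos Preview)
Your proposal is correct and follows essentially the same two-case split as the paper: for distinct primitives you invoke Lemma~\ref{lemma:distinguishable_jitter_unbounded_drift} (ratio-only dependence) together with the obvious monotonicity in $\gamma$, and for a common primitive you use the $\gamma\xi$-chain gap to contradict $\frac{T'Z'_i}{TZ_i}\in[(\gamma\xi)^{-1},\gamma\xi]$. Your write-up is slightly more explicit about the bookkeeping (recovering $d$ and $\mathbf{x}$ via the gcd, and stating the monotonicity direction carefully), but the argument is the same as the paper's.
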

\begin{proof}
By construction of the codebook $\mathcal{C}^*_{\xi,\infty}$ and
from the arguments of the previous section, it is clear that codewords from different $\mathcal{L}_{\mathbf{x}}^{\gamma\xi}$ are distinguishable (for any $\gamma$).
After decoding the ratio $\frac{x_2}{x_1}$ at the receiver, it then needs to decode one of the runs, say $x_1$. That is, we need to show that there are no $T,T'\in[T_1,T_2]$ and $Z_1,Z'_1\in[a,b]$ such that $TZ_1x_1=T'Z'_1x'_1$ for two distinct codewords, or equivalently $\frac{x_1}{x'_1}=\frac{T'Z'_1}{TZ_1}$.
By construction of $\mathcal{L}_{\mathbf{x}}^{\gamma\xi}$, we must have $\frac{x_1}{x'_1}>\gamma\xi$ or $\frac{x_1}{x'_1}<\gamma^{-1}\xi^{-1}$.
On the other hand, we have 
\[\gamma^{-1}\xi^{-1}=\frac{T_1a}{T_2b}\leq\frac{T'Z'_1}{TZ_1}\leq\frac{T_2b}{T_1a}=\gamma\xi.\]
Therefore all codewords in $\mathcal{C}_{\xi,\gamma}$ are distinguishable.
\end{proof}

\section{Numerical Results}
\label{sec:numerical_results}

Fig.~\ref{fig:clock_drift_capacity_plots} shows the zero-error capacity without jitter ($\xi=1$) as a function of the clock drift ratio $\gamma$.
It can be seen that the decrease in rate incurred by clock drift is rather small.

Fig.~\ref{fig:jitter_capacity_plots} shows the zero-error capacity and capacity lower bound as a function of jitter, for $\gamma=1$ (no clock drift), $\gamma=7/4$ (as in \cite{tabesh2015power}), and $\gamma=\infty$ (unbounded clock drift).
{Note that in general, a zero-error code designed for jitter $\xi$ will be zero-error for any $\xi'$ such that $\xi'<\xi$.
Hence the zero-error capacity should be a decreasing function of $\xi$.
This does not hold for the lower bound in Theorem~\ref{thm:lower_bound}, since the construction there is not necessarily optimal.
However, we can obtain a tighter lower bound for a given jitter parameter $\xi$ by using the largest codebook out of all the codebooks for $\xi'\geq\xi$: 
\[
R_{\xi,\gamma} = \max_{\xi'\geq\xi}\log|\mathcal{C}_{\xi',\gamma}|,
\]
which is depicted by the dashed curve.
}

Finally, Fig.~\ref{fig:capacity_frame_size_k2} and Fig.~\ref{fig:capacity_frame_size_k3} show the capacity as a function of the frame size $M$, without clock drift and with unbounded clock drift, for $k=2$ and $k=3$ pulses, respectively. This is compared to the naive scheme of \cite{tabesh2015power}, where the first pulse is used to learn the clock cycle duration $T$, and the remaining $k-1$ pulses can be allocated freely in the remaining $M-1$ bins, yielding a rate of $\log\binom{M-1}{k-1}$.

\begin{figure}
\centering
\begin{tikzpicture}


\begin{axis}[
xlabel={\small Clock drift ratio $\gamma$},
ylabel={\small $R$ [bits / frame]},
xmin=1, xmax=64,
xtick={1,2,4,8,16,32,64},
xticklabels={1,2,4,8,16,32,64},
ticklabel style = {font=\small},
ymin=10.3185272064338, ymax=11.0558840323901,
xmode=log,
tick align=outside,
tick pos=left,
x grid style={lightgray!92.02614379084967!black},
y grid style={lightgray!92.02614379084967!black}
]
\addplot [thick, black, forget plot]
table {%
1 11.0223678130285
1.04246576084112 11.0195907283579
1.08673486252606 11.0014081943928
1.1328838852958 10.9801395776392
1.18099266142953 10.9571020415623
1.23114441334492 10.9425145053392
1.2834258975629 10.9053870050181
1.33792755478611 10.871135184243
1.39474366635041 10.871135184243
1.45397251732031 10.8610869059954
1.5157165665104 10.7723145739217
1.58008262372675 10.7723145739217
1.64718203453515 10.7714894695006
1.71713087287551 10.7698378436294
1.79005014185594 10.7615512324445
1.86606598307361 10.7615512324445
1.94530989482457 10.7615512324445
2.02791895958006 10.5304063370991
2.11403608112276 10.5304063370991
2.20381023175322 10.5294305541462
2.29739670999407 10.5274770060604
2.39495740923786 10.5216004397237
2.49666109780322 10.5216004397237
2.60268371088387 10.5216004397237
2.71320865489534 10.515699838284
2.82842712474619 10.515699838284
2.9485384345822 10.515699838284
3.07375036257602 10.4460494067166
3.20427951035849 10.4460494067166
3.34035167771348 10.4419067045422
3.4822022531845 10.4419067045422
3.63007662126864 10.4419067045422
3.78423058690238 10.4419067045422
3.94493081797344 10.4419067045422
4.11245530662427 10.4167975276061
4.28709385014517 10.4146852358072
4.46914855228888 10.4146852358072
4.65893434587382 10.4146852358072
4.85677953758019 10.4146852358072
5.06302637588112 10.3869402453243
5.27803164309158 10.3869402453243
5.50216727255897 10.3858624006415
5.73582099206331 10.3858624006415
5.97939699453975 10.3858624006415
6.233316637284 10.3815429511846
6.49801917084988 10.3815429511846
6.77396249890022 10.3815429511846
7.06162397032524 10.3750394313469
7.361501204999 10.3750394313469
7.67411295460211 10.3750394313469
8 10.3706874068072
8.33972608672897 10.3706874068072
8.69387890020846 10.3706874068072
9.06307108236639 10.3641346550081
9.44794129143624 10.3641346550081
9.84915530675933 10.3641346550081
10.2674071805032 10.3619437737352
10.7034204382889 10.3619437737352
11.1579493308032 10.3619437737352
11.6317801385625 10.3619437737352
12.1257325320832 10.3619437737352
12.640660989814 10.3619437737352
13.1774562762812 10.3575520046181
13.7370469830041 10.3575520046181
14.3204011348476 10.3575520046181
14.9285278645889 10.3575520046181
15.5624791585966 10.3575520046181
16.2233516766405 10.3553510964248
16.9122886489821 10.3553510964248
17.6304818540258 10.3553510964248
18.3791736799526 10.3553510964248
19.1596592739029 10.3553510964248
19.9732887824258 10.3553510964248
20.8214696870709 10.3553510964248
21.7056692391628 10.3531468254981
22.6274169979695 10.3531468254981
23.5883074766576 10.3531468254981
24.5900029006082 10.3531468254981
25.6342360828679 10.3531468254981
26.7228134217078 10.3531468254981
27.857618025476 10.3531468254981
29.0406129701491 10.3531468254981
30.2738446952191 10.3531468254981
31.5594465437875 10.3531468254981
32.8996424529941 10.3520434257954
34.2967508011614 10.3520434257954
35.753188418311 10.3520434257954
37.2714747669906 10.3520434257954
38.8542363006415 10.3520434257954
40.5042110070489 10.3520434257954
42.2242531447326 10.3520434257954
44.0173381804718 10.3520434257954
45.8865679365065 10.3520434257954
47.835175956318 10.3520434257954
49.866533098272 10.3520434257954
51.9841533667991 10.3520434257954
54.1916999912017 10.3520434257954
56.4929917626019 10.3520434257954
58.892009639992 10.3520434257954
61.3929036368169 10.3520434257954
64 10.3520434257954
};
\end{axis}

\end{tikzpicture}
\caption{Zero-error rates for $k=2$ and $M=65$ without jitter ($\xi = 1$).}
\label{fig:clock_drift_capacity_plots}
\end{figure}
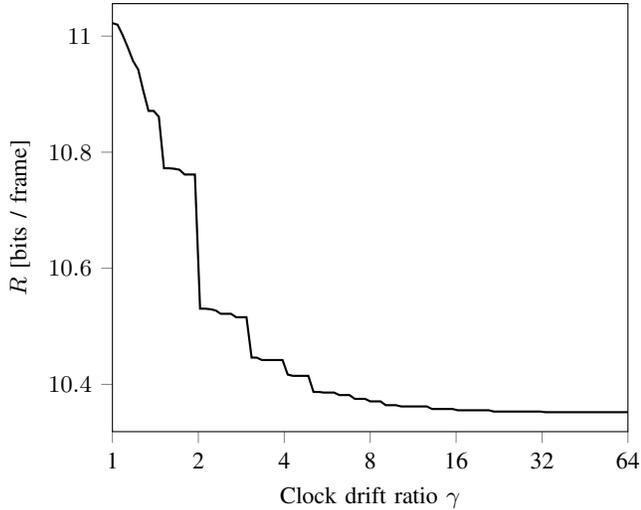

{
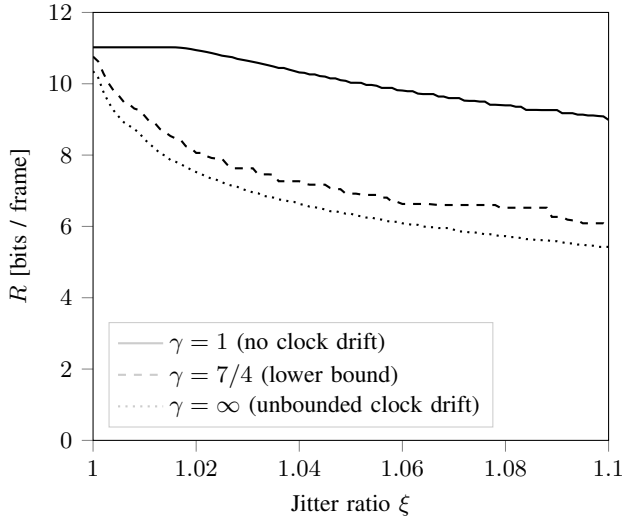
\begin{figure}
\centering
\begin{tikzpicture}


\begin{axis}[
xlabel={\small Jitter ratio $\xi$},
ylabel={\small $R$ [bits / frame]},
xmin=1, xmax=1.1,
ymin=0, ymax=12,
tick align=outside,
tick pos=left,
ticklabel style = {font=\small},
x grid style={lightgray!92.02614379084967!black},
y grid style={lightgray!92.02614379084967!black},
legend style={font=\small, at={(0.03,0.03)}, anchor=south west, draw=white!80.0!black},
legend entries={{$\gamma = 1$ (no clock drift)},{$\gamma = 7 / 4$ (lower bound)},{$\gamma = \infty$ (unbounded clock drift)}},
legend cell align={left}
]
\addplot [thick]
table {%
1 11.0223678130285
1.001 11.0223678130285
1.002 11.0223678130285
1.003 11.0223678130285
1.004 11.0223678130285
1.005 11.0223678130285
1.006 11.0223678130285
1.007 11.0223678130285
1.008 11.0223678130285
1.009 11.0223678130285
1.01 11.0223678130285
1.011 11.0223678130285
1.012 11.0223678130285
1.013 11.0223678130285
1.014 11.0223678130285
1.015 11.0223678130285
1.016 11.0209799389042
1.017 11.0098286173681
1.018 10.9943534368589
1.019 10.9715435539508
1.02 10.9425145053392
1.021 10.9188632372746
1.022 10.8917837032183
1.023 10.8610869059954
1.024 10.8265484872909
1.025 10.7879025593914
1.026 10.7665289085989
1.027 10.7448338374995
1.028 10.6969675262343
1.029 10.6706562491184
1.03 10.6438561897747
1.031 10.6147098441152
1.032 10.5849625007212
1.033 10.5517082616207
1.034 10.5186531556734
1.035 10.4838157772643
1.036 10.4429434958487
1.037 10.4429434958487
1.038 10.3976746329483
1.039 10.3586512008929
1.04 10.3174126137649
1.041 10.3060616894283
1.042 10.2620948453702
1.043 10.2620948453702
1.044 10.2155329997457
1.045 10.2008986050384
1.046 10.1510165388922
1.047 10.1510165388922
1.048 10.1006623390052
1.049 10.0821490413539
1.05 10.0279059965699
1.051 10.0279059965699
1.052 10.0251395622785
1.053 9.96722625883599
1.054 9.96722625883599
1.055 9.94397991434374
1.056 9.88417051910844
1.057 9.88417051910844
1.058 9.88417051910844
1.059 9.82017896241519
1.06 9.81057163474115
1.061 9.79116288855502
1.062 9.79116288855502
1.063 9.72451385311995
1.064 9.71080643369935
1.065 9.71080643369935
1.066 9.71080643369935
1.067 9.64205169292798
1.068 9.64205169292798
1.069 9.5980525001616
1.07 9.5980525001616
1.071 9.5980525001616
1.072 9.5274770060604
1.073 9.51963625284321
1.074 9.51963625284321
1.075 9.49984588708321
1.076 9.49984588708321
1.077 9.41151098801207
1.078 9.41151098801207
1.079 9.403012023575
1.08 9.39446269461032
1.081 9.39446269461032
1.082 9.35535109642481
1.083 9.35535109642481
1.084 9.2667865406949
1.085 9.2667865406949
1.086 9.2667865406949
1.087 9.26209484537018
1.088 9.26209484537018
1.089 9.26209484537018
1.09 9.26209484537018
1.091 9.17492568250068
1.092 9.17492568250068
1.093 9.17492568250068
1.094 9.13442632022093
1.095 9.13442632022093
1.096 9.11113567023471
1.097 9.11113567023471
1.098 9.09539702279256
1.099 9.08480838780436
1.1 8.98584193700334
};
\addplot [thick, dashed]
table {%
1 10.7615512324445
1.001 10.6257088430645
1.002 10.2957689344205
1.003 10.0168082876866
1.004 9.80413102118332
1.005 9.66355810421727
1.006 9.48381577726426
1.007 9.4262647547021
1.008 9.3037807481771
1.009 9.25974326369078
1.01 9.1006623390052
1.011 8.93073733756289
1.012 8.89481776330794
1.013 8.7279204545632
1.014 8.64024493622235
1.015 8.5352753766208
1.016 8.467605550083
1.017 8.4262647547021
1.018 8.24792751344359
1.019 8.19475685442225
1.02 8.06069593168755
1.021 8.06069593168755
1.022 7.97154355395077
1.023 7.91886323727459
1.024 7.90086680798075
1.025 7.90086680798075
1.026 7.73470962022584
1.027 7.62935662007961
1.028 7.62935662007961
1.029 7.62935662007961
1.03 7.62935662007961
1.031 7.62935662007961
1.032 7.4594316186373
1.033 7.4594316186373
1.034 7.4594316186373
1.035 7.45121111183233
1.036 7.2667865406949
1.037 7.2667865406949
1.038 7.2667865406949
1.039 7.2667865406949
1.04 7.2667865406949
1.041 7.19967234483636
1.042 7.16992500144231
1.043 7.16992500144231
1.044 7.16992500144231
1.045 7.16992500144231
1.046 7.07681559705083
1.047 7.04439411935845
1.048 7.04439411935845
1.049 6.91886323727459
1.05 6.91886323727459
1.051 6.91886323727459
1.052 6.89481776330794
1.053 6.88264304936184
1.054 6.88264304936184
1.055 6.88264304936184
1.056 6.8073549220576
1.057 6.8073549220576
1.058 6.68650052718322
1.059 6.68650052718322
1.06 6.62935662007961
1.061 6.62935662007961
1.062 6.62935662007961
1.063 6.62935662007961
1.064 6.62935662007961
1.065 6.62935662007961
1.066 6.61470984411521
1.067 6.59991284218713
1.068 6.59991284218713
1.069 6.59991284218713
1.07 6.59991284218713
1.071 6.59991284218713
1.072 6.59991284218713
1.073 6.59991284218713
1.074 6.59991284218713
1.075 6.59991284218713
1.076 6.59991284218713
1.077 6.59991284218713
1.078 6.59991284218713
1.079 6.52356195605701
1.08 6.52356195605701
1.081 6.52356195605701
1.082 6.52356195605701
1.083 6.52356195605701
1.084 6.52356195605701
1.085 6.52356195605701
1.086 6.52356195605701
1.087 6.52356195605701
1.088 6.52356195605701
1.089 6.2667865406949
1.09 6.2667865406949
1.091 6.2667865406949
1.092 6.18982455888002
1.093 6.16992500144231
1.094 6.16992500144231
1.095 6.08746284125034
1.096 6.08746284125034
1.097 6.08746284125034
1.098 6.08746284125034
1.099 6.08746284125034
1.1 5.97727992349992
};
\addplot [thick, dotted]
table {%
1 10.3520434257954
1.001 10.1686721181322
1.002 9.77643303244473
1.003 9.48179943166575
1.004 9.25029841790633
1.005 9.07414146275251
1.006 8.91587937883577
1.007 8.82336724004623
1.008 8.73131903102506
1.009 8.61102479730735
1.01 8.43879185257826
1.011 8.30833903013941
1.012 8.18487534290828
1.013 8.04984854945056
1.014 7.98299357469431
1.015 7.86418614465428
1.016 7.81378119121704
1.017 7.74819284958946
1.018 7.66533591718518
1.019 7.59245703726808
1.02 7.52356195605701
1.021 7.467605550083
1.022 7.39231742277876
1.023 7.33985000288462
1.024 7.3037807481771
1.025 7.23840473932508
1.026 7.17990909001493
1.027 7.13955135239879
1.028 7.10852445677817
1.029 7.04439411935845
1.03 7
1.031 6.95419631038687
1.032 6.93073733756289
1.033 6.85798099512757
1.034 6.83289001416474
1.035 6.8073549220576
1.036 6.75488750216347
1.037 6.74146698640115
1.038 6.70043971814109
1.039 6.68650052718322
1.04 6.62935662007961
1.041 6.59991284218713
1.042 6.55458885167764
1.043 6.53915881110803
1.044 6.5077946401987
1.045 6.4757334309664
1.046 6.4262647547021
1.047 6.4262647547021
1.048 6.39231742277876
1.049 6.35755200461808
1.05 6.35755200461808
1.051 6.3037807481771
1.052 6.28540221886225
1.053 6.24792751344359
1.054 6.24792751344359
1.055 6.20945336562895
1.056 6.18982455888002
1.057 6.14974711950468
1.058 6.12928301694497
1.059 6.12928301694497
1.06 6.08746284125034
1.061 6.06608919045777
1.062 6.04439411935845
1.063 6.04439411935845
1.064 6
1.065 5.97727992349992
1.066 5.97727992349992
1.067 5.95419631038687
1.068 5.95419631038687
1.069 5.95419631038687
1.07 5.90689059560852
1.071 5.85798099512757
1.072 5.85798099512757
1.073 5.83289001416474
1.074 5.83289001416474
1.075 5.8073549220576
1.076 5.78135971352466
1.077 5.78135971352466
1.078 5.75488750216347
1.079 5.7279204545632
1.08 5.7279204545632
1.081 5.70043971814109
1.082 5.70043971814109
1.083 5.6724253419715
1.084 5.64385618977472
1.085 5.64385618977472
1.086 5.61470984411521
1.087 5.61470984411521
1.088 5.61470984411521
1.089 5.58496250072116
1.09 5.58496250072116
1.091 5.55458885167764
1.092 5.52356195605701
1.093 5.52356195605701
1.094 5.49185309632967
1.095 5.49185309632967
1.096 5.4594316186373
1.097 5.4594316186373
1.098 5.4262647547021
1.099 5.4262647547021
1.1 5.4262647547021
};
\end{axis}

\end{tikzpicture}
\caption{Zero-error rates for $k=2$ and $M=65$ with clock drift and jitter.}
\label{fig:jitter_capacity_plots}
\end{figure}
}

\begin{figure}
\centering
\begin{tikzpicture}


\begin{axis}[
xlabel={\small Frame size M},
ylabel={\small $R$ [bits / frame]},
xmin=4, xmax=128,
ymin=0, ymax=17,
xmode=log,
tick align=outside,
tick pos=left,
xtick={4,8,16,32,64,128},
xticklabels={4,8,16,32,64,128},
ticklabel style = {font=\small},
x grid style={lightgray!92.02614379084967!black},
y grid style={lightgray!92.02614379084967!black},
legend entries={{$\gamma = 1$ (no clock drift)},{$\gamma = \infty$ (unbounded clock drift)},{naive scheme}},
legend cell align={left},
legend style={font=\small, at={(0.03,0.97)}, anchor=north west, draw=white!80.0!black}
]
\addplot [thick]
table {%
4 2.58496250072116
8 4.8073549220576
16 6.90689059560852
32 8.95419631038687
64 10.9772799234999
128 12.9886846867722
};
\addplot [thick, dashed]
table {%
4 2.32192809488736
8 4.39231742277876
16 6.3037807481771
32 8.33539035469392
64 10.298062567719
128 12.2937590096679
};
\addplot [thick, dotted]
table {%
4 1.58496250072116
8 2.8073549220576
16 3.90689059560852
32 4.95419631038687
64 5.97727992349992
128 6.98868468677217
};
\end{axis}

\end{tikzpicture}
\caption{Zero-error rates without jitter ($\xi=1$) for $k=2$ as a function of the frame size $M$, with and without clock drift, and compared to the naive scheme of \cite{tabesh2015power}.}
\label{fig:capacity_frame_size_k2}
\end{figure}

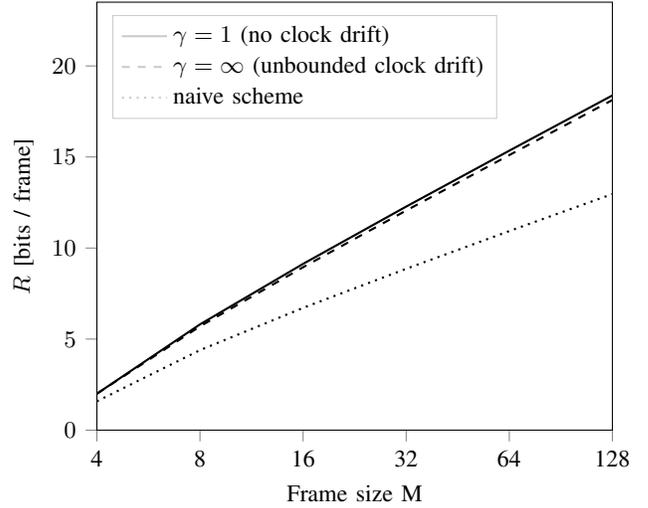
\begin{figure}
\centering
\begin{tikzpicture}


\begin{axis}[
xlabel={\small Frame size M},
ylabel={\small $R$ [bits / frame]},
xmin=4, xmax=128,
ymin=0, ymax=23.5,
xmode=log,
tick align=outside,
tick pos=left,
xtick={4,8,16,32,64,128},
xticklabels={4,8,16,32,64,128},
ticklabel style = {font=\small},
x grid style={lightgray!92.02614379084967!black},
y grid style={lightgray!92.02614379084967!black},
legend entries={{$\gamma = 1$ (no clock drift)},{$\gamma = \infty$ (unbounded clock drift)},{naive scheme}},
legend cell align={left},
legend style={font=\small, at={(0.03,0.97)}, anchor=north west, draw=white!80.0!black}
]
\addplot [thick]
table {%
4 2
8 5.8073549220576
16 9.12928301694497
32 12.2761244052742
64 15.3465137331656
128 18.3810021095509
};
\addplot [thick, dashed]
table {%
4 2
8 5.70043971814109
16 8.94544383637791
32 12.0590063952011
64 15.1034107601364
128 18.1282561805259
};
\addplot [thick, dotted]
table {%
4 1.58496250072116
8 4.39231742277876
16 6.71424551766612
32 8.86108690599539
64 10.9314762338868
128 12.9659646102721
};
\end{axis}

\end{tikzpicture}
\caption{Zero-error rates without jitter ($\xi=1$) for $k=3$ as a function of the frame size $M$, with and without clock drift, and compared to the naive scheme of \cite{tabesh2015power}.}
\label{fig:capacity_frame_size_k3}
\end{figure}

\section{Conclusion}
We introduced a model for communication with crystal-free radios, which includes two components of clock uncertainty: jitter and clock drift. The effects of slow clock drift suggest a new approach to designing codes for this type of radios. In particular, we show that estimating the clock cycle at the receiver may be suboptimal, and characterize the optimal code by considering ratios of runs, which are unaffected by clock drift.

When both jitter and clock drift are present, we find the capacity or achievable rate for a number of special cases. Characterizing the capacity and optimal zero-error codes for the case of general $(k, M, \xi, \gamma)$ is the subject of ongoing research.

\bibliographystyle{IEEEtran}
\bibliography{clock_drift_jitter}

\begin{IEEEbiography}[{\includegraphics[width=1in,height=1.25in,clip,
keepaspectratio]{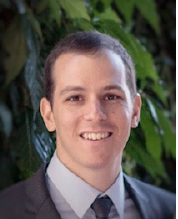}}]{Dor Shaviv}
(S'13) received the B.Sc. degrees (summa cum laude) in electrical engineering and physics and the M.Sc. degree in electrical engineering from the Technion---Israel Institute of Technology, Haifa, Israel, in 2007 and 2012, respectively.
He is currently a Ph.D. candidate in the Electrical Engineering Department at Stanford University.
From 2007 to 2013, he was a research and development engineer with the Israel Defense Forces.
He is a recipient of a Robert Bosch Stanford Graduate Fellowship.
\end{IEEEbiography}

\begin{IEEEbiography}[{\includegraphics[width=1in,height=1.25in,clip,
keepaspectratio]{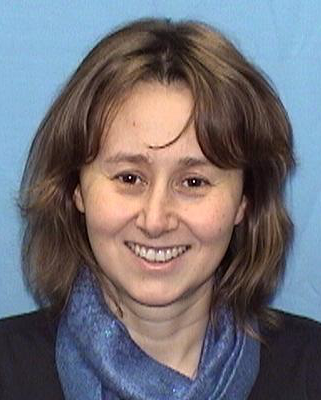}}]{Ayfer \"Ozg\"ur}
(M'06) received her B.Sc. degrees in electrical engineering and physics from Middle East Technical University, Turkey, in 2001 and the M.Sc. degree in communications from the same university in 2004. From 2001 to 2004, she worked as hardware engineer for the Defense Industries Development Institute in Turkey. She received her Ph.D. degree in 2009 from the Information Processing Group at EPFL, Switzerland. In 2010 and 2011, she was a post-doctoral scholar with the Algorithmic Research in Network Information Group at EPFL. She is currently an Assistant Professor in the Electrical Engineering Department at Stanford University. Her research interests include network communications, wireless systems, and information and coding theory. Dr.~\"Ozg\"ur received the EPFL Best Ph.D. Thesis Award in 2010 and a NSF CAREER award in 2013.
\end{IEEEbiography}

\begin{IEEEbiography}[{\includegraphics[width=1in,height=1.25in,clip,
keepaspectratio]{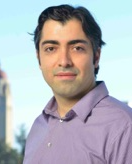}}]{Amin Arbabian}
Amin Arbabian (S'06–M'12–SM'17) received the Ph.D. degree in electrical engineering and computer science from the University of California at Berkeley, Berkeley, CA, in 2011. From 2007 and to 2008, he was part of the Initial Engineering Team, Tagarray, Inc., Palo Alto, CA, USA. He was with the Qualcomm’s Corporate Research and Development Division, San Diego, CA, in 2010. In 2012, he joined Stanford University, Stanford, CA, as an Assistant Professor of electrical engineering. His current research interests include mm-wave and high-frequency circuits and systems, imaging technologies, Internet-of-Everything devices including wireless power delivery techniques, and medical implants.

Dr. Arbabian was the recipient or co-recipient of the 2016 Stanford University Tau Beta Pi Award for Excellence in Undergraduate Teaching, the 2015 NSF CAREER Award, the 2014 DARPA Young Faculty Award including the Director’s Fellowship in 2016, the 2013 Hellman Faculty Scholarship, and best paper awards at the 2017 IEEE Biomedical Circuits and Systems Conference, the 2016 IEEE Conference on Biomedical Wireless Technologies, Networks, and Sensing Systems, the 2014 IEEE VLSI Circuits symposium, the 2013 IEEE International Conference on Ultra-Wideband, the 2010 IEEE Jack Kilby Award for Outstanding Student Paper at the International Solid-State Circuits Conference, and two time second place best student paper awards at 2008 and 2011 RFIC symposiums. He currently serves on the steering committee of RFIC Symposium, the technical program committees of RFIC symposium, ESSCIRC, and VLSI Circuits Symposium, and as an Associate Editor of the IEEE Solid-State Circuits Letters and the IEEE Journal of Electromagnetics, RF and Microwaves in Medicine and Biology.
\end{IEEEbiography}

\end{document}